\documentclass[11pt]{article}

% !TEX root = fam.tex

\usepackage{amsmath,amsfonts,amssymb,amsthm}
\usepackage{bm}
\usepackage{constants}
\usepackage[abspath]{currfile}
\usepackage[yyyymmdd,hhmmss]{datetime}
\usepackage[verbose,colorlinks]{hyperref}
\hypersetup{
  colorlinks,
  citecolor=violet,
  linkcolor=blue,
  urlcolor=red}
\usepackage[usenames,dvipsnames,svgnames,table]{xcolor}
\usepackage{graphicx}
\usepackage{cancel}
\usepackage{natbib}

\usepackage{accents}
\usepackage{enumitem}
\usepackage{subcaption}

\usepackage{tabularx} % automatic line break in table

% margin
\usepackage[margin=1in, centering]{geometry}
  
\theoremstyle{plain}
\newtheorem{assumption}{Assumption}
\newtheorem{theorem}{Theorem}

\newtheorem{proposition}{Proposition}
\newtheorem{lemma}{Lemma}
\theoremstyle{definition}

\newtheorem{remark}{Remark}

\DeclareMathOperator*{\argmin}{\operatorname{argmin}}

\DeclareMathOperator*{\ST}{\operatorname{ST}}

\newcommand {\BM} {\begin{bmatrix}}
\newcommand {\EM} {\end{bmatrix}}
\newcommand {\mc}[1]{{\mathcal{#1}}}
\newcommand {\mb}[1]{{\mathbb{#1}}}
\newcommand {\ms}[1]{{\mathsf{#1}}}

\newcommand {\Av}[1]{{\left| #1 \right|}}
\newcommand {\av}[1]{{| #1 |}}
\newcommand{\E}{\mb{E}}
\newcommand{\pr}{\mb{P}}
\newcommand {\Var}{\text{Var}}

\newcommand {\ip}[2]{{\left\langle #1, #2 \right\rangle}}
\newcommand {\set}[1]{{\left\{ #1 \right\}}}
\newcommand {\pa}[1]{{\left( #1 \right)}}
\newcommand {\br}[1]{{\left[ #1 \right]}}
\newcommand {\norm}[1]{{\lVert #1 \rVert}}
\newcommand {\Norm}[1]{{\left\lVert #1 \right\rVert}}

\newcommand {\tr}{{\,\ms{tr}}}
\newcommand {\card}{\ms{card}}

\newcommand {\sign}{\ms{sign}}

\newcommand{\fr}[1]{{\norm{ #1 }}}

\newcommand {\rank}{\ms{rank}}

\def \R {\mb{R}}

\def \X {\ms{X}}
\def \Z {\ms{Z}}

\newcommand{\for}{{\quad\mbox{for}\quad}} 

\newconstantfamily{FixedConstants}{symbol={M}}

\newcommand{\Ml}[1]{\ensuremath{\Cl[FixedConstants]{#1}}}
\newcommand{\Mr}[1]{\ensuremath{\Cr{#1}}}

% dot product
\makeatletter
\newcommand*\DP{\mathpalette\DP@{.5}}
\newcommand*\DP@[2]{\mathbin{\vcenter{\hbox{\scalebox{#2}{$\m@th#1\bullet$}}}}}
\makeatother

%%%%%%%%%%%%%%%%%%%%%%%%%%%%%%%%%%%%%%%%%
%% changes
% \usepackage{bbold} % doesn't work
\usepackage{algorithm}
\usepackage{algpseudocode}
\usepackage{booktabs}
\usepackage{multirow}

\def \v {\ms{v}}
\def \vh {\ms{vh}}
\def \c {\ms{c}}
%%%%%%%%%%%%%%%%%%%%%%%%%%%%%%%%%%%%%%%%

\usepackage{authblk}

\begin{document}

\title{Covariance Regression based on Basis Expansion}

\author[1,2]{Kwan-Young Bak}
\author[1,2,*]{Seongoh Park}
\affil[1]{\normalsize School of Mathematics, Statistics and Data Science, Sungshin Women's University, 2, Bomun-ro 34da-gil, Seongbuk-gu, Seoul 02844, Republic of Korea}
\affil[2]{\normalsize Data Science Center, Sungshin Women's University, 2, Bomun-ro 34da-gil, Seongbuk-gu, Seoul 02844, Republic of Korea}
\affil[*]{\normalsize Corresponding Author: spark6@sungshin.ac.kr}

\date{\today}
\maketitle

\begin{abstract}
	This paper presents a study on an $\ell_1$-penalized covariance regression method. Conventional approaches in high-dimensional covariance estimation often lack the flexibility to integrate external information. As a remedy, we adopt the regression-based covariance modeling framework and introduce a linear covariance selection model (LCSM) to encompass a broader spectrum of covariance structures when covariate information is available. Unlike existing methods, we do not assume that the true covariance matrix can be exactly represented by a linear combination of known basis matrices. Instead, we adopt additional basis matrices for a portion of the covariance patterns not captured by the given bases. To estimate high-dimensional regression coefficients, we exploit the sparsity-inducing $\ell_1$-penalization scheme. Our theoretical analyses are based on the (symmetric) matrix regression model with additive random error matrix, which allows us to establish new non-asymptotic convergence rates of the proposed covariance estimator.
	The proposed method is implemented with the coordinate descent algorithm. We conduct empirical evaluation on simulated data to complement theoretical findings and underscore the efficacy of our approach. To show a practical applicability of our method, we further apply it to the co-expression analysis of liver gene expression data where the given basis corresponds to the adjacency matrix of the co-expression network.

	Keywords: 
	Basis expansion;
	Covariance regression;
	$\ell_1$-penalty;
	Co-expression analysis;
	Matrix regression
\end{abstract}

\section{Introduction}
A covariance matrix stands as an essential component of multivariate analyses as it portrays the associations among variables. A pertinent example would be a co-expression analysis that explores the co-expression patterns among genes (\cite{Choi:2009,Rahmatallah:2013,Oh:2020}). To unravel the complex structures of data observed in the modern era, a high-dimensional covariance matrix has been extensively studied for decades. As a consequence, various estimators tailored to specific structural assumptions have been proposed; (adaptive) thresholding estimators (\cite{Bickel:2008a,Rothman:2009}), banding/tapering estimators (\cite{Bickel:2008b, Leng:2011}), factor models (\cite{Fan:2008,Fan:2013}), spiked models (\cite{Johnstone:2001}) to name a few.

However, existing works have an evident limitation in that external information (e.g. covariates, network structure) cannot be incorporated into the covariance model. To address this challenge, recent studies (\cite{zou2017covariance,Fan:2024}) proposed an attractive approach of regressing covariances using similarity matrices derived from given covariates as predictors. The authors adopted least square estimation or (quasi-) maximum likelihood estimation, and established related asymptotic properties. 
It is worth recognizing that the notion of regression modeling for a covariance matrix is not entirely new and closely related to the linear Gaussian covariance model (\cite{Anderson:1973,Zwiernik:2017}).
This approach considers the class of multivariate Gaussian distributions with linear constraints on the covariance matrix, represented as:
\[
Y \sim N_d(\mu, \Sigma), \quad \Sigma = \sum_{j=1}^J \alpha_j S_j,
\]
where $\{S_j\}_{j=1}^J \subset \mathbb{R}^{d \times d}$ are known symmetric matrices. The properties of the log-likelihood function and maximum likelihood estimator are discussed in \cite{Zwiernik:2017}. As a special case of the covariance regression model, \cite{lan2018covariance} utilized the network structure represented by an adjacency matrix $A$ of nodes as the regressors, which leads to the regression equation below.
\begin{equation}\label{eq:lan_model}
	\Sigma  = \alpha_0 I + \sum_{j=1}^s \alpha_j A^j.
\end{equation}
Here, the power of $A$ corresponds to the $j$-path between nodes, representing higher-order neighborhoods. It should be noted that this regression model implicitly assumes the population covariance matrix has the same eigen-space with the regressor matrix $A$. Though this approach is novel to decompose higher-order network effects on  covariance, the underlying assumption (\ref{eq:lan_model}) is greatly restrictive to cover a broad class of covariance matrices encountered in real-world scenarios. 

Motivated from it, we propose a linear covariance selection model (LCSM) to cover a larger class of covariance matrices when covariate information is available. In other words, our regression model is given as follows:
\[
\Sigma  = \alpha_0 I + \sum_{j=1}^s \alpha_j G_j + \Sigma_R,
\]
where an additional parameter $\Sigma_R$ is designed to capture the part not explained by the predetermined basis matrices $\set{G_j}_{j=1}^s$. To be specific, we express $\Sigma_R$ as a linear combination of basis matrices linearly independent of given basis matrices (see (\ref{eq:LCSM_model})). While this new class can cover any symmetric matrix, it can suffer from potential overparametrization in the high-dimensional space. To mitigate it, we select a linear covariance model adaptive to the data utilizing $\ell_1$-penalized least squares. 

This remainder term $\Sigma_R$ distinguishes our model from the previous works (\cite{zou2017covariance,lan2018covariance,Fan:2024}) wherein they only considered the model class of covariance matrices fully determined by the known basis matrices. Recently, \cite{Fan:2024} extended their earlier work by allowing the number of known basis matrices to be divergent in an order of $d^{\kappa}$ for some $\kappa< 1/4$ (see Condition 1 therein). In contrast, our approach accommodates a more general case ($\kappa \le 2$) through the regularization scheme.

%{\color{black}
This paper also presents a new analytic framework, a (symmetric) matrix regression model with additive random error matrix, under which we derive new non-asymptotic convergence rates of the proposed covariance estimator. This regression framework is particularly attractive as it no longer needs a likelihood function in estimation and allows a more general study in (symmetric) matrix regression.
%}

The structure of this paper is as follows. We introduce the class of the new covariance regression model and its estimation with details of the implementation in Section 2. Theoretical results are given in Section 3. Following \cite{lan2018covariance}, we focus on the covariance regression model based on an adjacency matrix and examine its empirical performance using simulated data in Section 4. In Section 5, we analyze the liver gene expression data in the context of a co-expression network analysis.  We conclude this paper with discussion in Section 6.

\section{Linear covariance selection model}

\subsection{Model and estimator}\label{ssec:model}

Let $\set{Y_i}_{i=1}^n$ be a set of centered multivariate observations with $Y_i = (Y_{ij}) \in \R^d$. For $j=1,\ldots, s$, let $A^j$ denote the $j$-th power of the adjacency matrix whose off-diagonal elements represent the number of $j$-paths between two nodes of a network. Given a set of predetermined basis matrices $\set{G_j}_{j=1}^s$, we consider a new covariance model:
\[
\Sigma =  \alpha_0 I + \Sigma_G + \Sigma_R
\]
where $I$ denotes the identity matrix of order $d$, $\Sigma_G = \sum_{j=1}^s \alpha_j G_j$, and $\Sigma_R$ is linearly independent of $I$ and $G_j$ for $j=1,\ldots, s$, where the linear independence implies that there exists no non-trivial linear combination of the matrices that equals the zero matrix. Here, $R$ represents the remaining portion of the covariance that is not explained by $G_1, \ldots, G_s$. 
Given the basis matrices $\set{G_j}_{j=1}^s$, we can expand the remainder matrix by $\Sigma_R=\sum_{j=1}^q \beta_j F_j$ where we choose $\set{F_k}_{k=1}^q$ such that $\ip{G_j}{F_k} = 0$ for $j=1,\ldots, s$ and $k=1,\ldots, q$. Here, $\ip{D_1}{D_2}$ denote the Frobenius inner product $\tr(D_1^\top D_2)$ between two matrices $D_1$ and $D_2$. The choice of the basis set $\set{F_k}_{k=1}^q$ will be discussed in Section \ref{ssec:implementation}.
%
%
%Taking the expectation in both sides, we get $\Sigma  \triangleq \Cov(Y) =  \alpha_0 I + \Sigma_G + \Sigma_R$, which is assumed to be positive definite,

The (true) covariance can be re-expressed as
\begin{equation}\label{eq:LCSM_model}
	\Sigma = \alpha_0^* I + \sum_{j=1}^s \alpha_j^* G_j + \sum_{j=1}^q \beta_j^* F_j.	
\end{equation}
We enumerate the basis as 
\[
\set{I, G, \ldots, G_s, F_1, \ldots, F_q} = \set{B_j}_{j=1}^p
\] with $p=s + q + 1$ and define 
\[
\theta^* = (\theta_j^*) = (\alpha_0^*, \alpha_1^*, \ldots, \alpha_s^*, \beta_1^*, \ldots, \beta_q^*) \in \R^p.
\] 
When considering a saturated model to represent an arbitrary symmetric matrix, $p$ is determined to be $d(d+1)/2$.
For $\theta = (\theta_1,\ldots, \theta_p) \in \R^p$, we denote the linear predictor by
\[
\Gamma_\theta = \theta_1 B_1 + \cdots + \theta_p B_p
\]
so that $\Sigma = \Gamma_{\theta^*}$. 

Since $\E [Y_i Y_i^\top] = \Sigma$, the empirical risk is defined as
\[
\ell(\theta) = \sum_{i=1}^n \fr{Z_i - \Gamma_\theta}^2,
\]
where $Z_i = Y_i Y_i^\top$ and $\fr{\cdot}=\ip{\cdot}{\cdot}$ denotes the Frobenius norm of a matrix.
Adopting an $\ell_1$-penalty to impose sparsity structure, we define the objective function to minimize by
\[
\ell^\lambda(\theta) = \ell(\theta) + 2\lambda \sum_{j=2}^p \av{\theta_j},
\]
where $\lambda > 0$ is the complexity parameters. The summation begins from $j=2$ since the intercept term is not penalized in general.
Alternatively, we may consider imposing the $\ell_1$-penalty only on coefficients corresponding to the remainder.

Now we define an estimator:
\begin{equation}\label{eq:LCSM_estimator}
	\hat{\theta}^\lambda = \argmin_{\theta \in \R^p} \ell^\lambda (\theta).	
\end{equation}
Unless absolutely necessary, we omit the superscript for notations to remain uncluttered. The covariance estimator is given by
\[
\hat{\Sigma} = \Gamma_{\hat{\theta}}.
\]
The existence and uniqueness of $\hat{\theta}$ are established in Lemma \ref{le:exist-unique} in \ref{sec:proofs}. If $\hat{\Sigma}$ does not satisfy positive definiteness, a slight adjustment is made to obtain a positive definite estimator, as described in Section \ref{ssec:spd-adjust}.

\subsection{Implementation scheme}\label{ssec:implementation}

\subsubsection*{Choice of the basis set}

To construct the basis matrices $\set{F_k}_{k=1}^q$ for the remainder matrix, we start with the matrix of the form
\[
\mc{D} = \BM \vh(I) & \vh(G_1) \cdots \vh(G_s) \EM \in \R^{d(d+1)/2 \times (s+1)},
\]
where $\vh(\cdot):\R^{d \times d} \to \R^{d(d+1)/2}$ is the half-vectorization operator that vectorizes the lower triangular part of a symmetric matrix. The half-vectorization operator is used only for basis generation purposes, but not used in estimation or theoretical analysis. Then, we can find $d(d+1)/2 - s - 1$ orthogonal vectors in the left null space of $\mc{D}$. We note that these vectors are orthogonal to the column space of $\mc{D}$ by construction. Now taking the inverse operation $\vh^{-1}(\cdot)$ yields the basis matrices $\set{B_j}_{j=s+2}^p$ in $\mathbb{R}^{d \times d}$. In particular, we consider an orthonormal basis $\set{F_k}_{k=1}^q$ with $\ip{F_k}{F_l}=\delta_{kl}$.

Remark that choosing another set $\big\{\tilde{F}_k\big\}_{k=1}^q$ of basis matrices orthogonal to $I$ and $\{G_j\}_{j=1}^s$ does not change the estimation of $\{\alpha_j\}_j$ due to the orthogonality. We also note that the corresponding coefficients $\tilde{\beta}_1, \ldots, \tilde{\beta}_p$ may have different values, but $\Sigma_R$ is still identifiable.

%
%In the analyses conducted in Section \ref{sec:simulation} and \ref{sec:realdata}, we have found them by investigating the left null space of the matrix whose columns are given by the half-vectorized basis matrices $\set{G_j}_{j=1}^s$ 

\subsubsection*{Coordinate descent algorithm}
This section presents an implementation algorithm to calculate the estimator $\hat{\theta}$. 
We adopt a coordinate descent algorithm to obtain $\hat{\theta}$. One may refer to, for example, \cite{friedman2007pathwise,tseng2001convergence} for the applications of the coordinate descent algorithm in statistics and its convergence properties, respectively. Let $(\tilde{\theta}_1, \ldots, \tilde{\theta}_p)$ denote the current values of the coefficients. For $j=1,\ldots, p$, the univariate objective function of the $j$-th coefficient, holding the other coefficients fixed, has the form
\[
\ell^\lambda_j(x) = \ell^\lambda(\tilde{\theta}_1, \ldots, \tilde{\theta}_{j-1}, x, \tilde{\theta}_{j+1}, \ldots, \tilde{\theta}_p).
\]
The update formula is given by
\[
\tilde{\theta}_j \leftarrow \argmin_{x \in \R} \ell_j^\lambda (x) \for j=1,\ldots, p.
\]
Let $D_{kl}$ denote the $(k,l)$ element of an arbitrary matrix $D$. 
An explicit from of the update formula is given in Proposition \ref{prop:update-formula}.

\begin{proposition}\label{prop:update-formula}
	The minimizer of the univariate objective function $\ell_j^\lambda$ is given by
	\[
	\argmin_{x \in \R} \ell_j^\lambda (x)
	= 
	\ST \pa{\frac{\sum_{i=1}^n \sum_{k,l=1}^d r_{ijkl} B_{j,kl}}{n \sum_{k,l=1}^d B_{j,kl}^2},  \frac{\lambda}{n \sum_{k,l=1}^d B_{j,kl}^2}} \for j=1,\ldots, p,
	\]
	where $r_{ijkl} = Z_{i,kl} - \sum_{j' \ne j} \tilde{\theta}_{j'} B_{j',kl}$ and the soft-thresholding operator is defined as 
	\[ 
	\ST(a, \lambda) 
	=
	\begin{cases}
	a - \lambda &\mbox{when $a>\lambda$}\\
	0 &\mbox{when $\av{a} \le \lambda$}\\
	a + \lambda &\mbox{when $a < -\lambda$}
	\end{cases}.
	\]
\end{proposition}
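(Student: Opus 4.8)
The plan is to isolate the dependence of $\ell^\lambda$ on the single coordinate $\theta_j$, which turns the update into the classical one-dimensional lasso subproblem, and then to solve that subproblem explicitly by subdifferential calculus, recognizing the solution as a soft-thresholding.

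First I would write $\Gamma_\theta = x B_j + \sum_{j'\neq j}\tilde{\theta}_{j'}B_{j'}$ with $x$ in the $j$-th slot, and let $R_{ij}$ be the matrix with entries $r_{ijkl} = Z_{i,kl} - \sum_{j'\neq j}\tilde{\theta}_{j'}B_{j',kl}$, so that $Z_i - \Gamma_\theta = R_{ij} - x B_j$. Expanding the Frobenius norm gives $\fr{Z_i-\Gamma_\theta}^2 = \fr{R_{ij}}^2 - 2x\ip{R_{ij}}{B_j} + x^2\fr{B_j}^2$. Summing over $i$, and noting that in $2\lambda\sum_{j'=2}^p\av{\theta_{j'}}$ only the term $2\lambda\av{x}$ depends on $x$ (and for $j=1$ no such term is present), we obtain, up to an additive constant $C$ independent of $x$,
\[
\ell_j^\lambda(x) = n\fr{B_j}^2\, x^2 - 2\Bigl(\sum_{i=1}^n \ip{R_{ij}}{B_j}\Bigr) x + 2\lambda\av{x} + C,
\]
where $\ip{R_{ij}}{B_j} = \sum_{k,l=1}^d r_{ijkl}B_{j,kl}$ and $\fr{B_j}^2 = \sum_{k,l=1}^d B_{j,kl}^2$.

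Next I would minimize $f(x) = c x^2 - 2bx + 2\lambda\av{x}$ with $c = n\fr{B_j}^2$ and $b = \sum_{i=1}^n\ip{R_{ij}}{B_j}$. Since the basis matrices $\set{B_j}$ are linearly independent they are in particular nonzero, hence $c>0$, $f$ is strictly convex, and its unique minimizer is characterized by $0 \in 2cx - 2b + 2\lambda\,\partial\av{x}$. Checking the three cases $x>0$, $x<0$, $x=0$ against $\partial\av{x}$ yields $x^\star = \ST\pa{b/c,\ \lambda/c}$; substituting $b/c = \bigl(\sum_i\sum_{k,l}r_{ijkl}B_{j,kl}\bigr)\big/\bigl(n\sum_{k,l}B_{j,kl}^2\bigr)$ and $\lambda/c = \lambda\big/\bigl(n\sum_{k,l}B_{j,kl}^2\bigr)$ gives exactly the stated formula. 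For the unpenalized coordinate $j=1$ the penalty term is absent, which corresponds to taking the threshold equal to $0$; since $\ST(a,0)=a$, the formula continues to hold under this convention.

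There is no genuine obstacle here: the argument is a routine computation. The only points that warrant a word of care are verifying $\fr{B_j}^2 \neq 0$ from linear independence of the basis — which is what guarantees strict convexity and hence a well-defined unique coordinate update — and the bookkeeping ensuring that the reduced objective has precisely the coefficients above, in particular that the cross term equals $-2x\ip{R_{ij}}{B_j}$ and the penalty contributes exactly $2\lambda\av{x}$ (or $0$ when $j=1$).
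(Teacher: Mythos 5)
Your proposal is correct and follows essentially the same route as the paper: isolate the $x$-dependence of the objective, reduce to the scalar problem $cx^2-2bx+2\lambda\av{x}$, and identify its minimizer as $\ST(b/c,\lambda/c)$ (the paper verifies this scalar fact by inspecting the parabola's axis of symmetry rather than via the subdifferential, which is an immaterial difference). Your extra remarks---that linear independence forces $\fr{B_j}^2>0$ and that the unpenalized coordinate $j=1$ corresponds to $\ST(a,0)=a$---are sound and in fact slightly more careful than the paper's own writeup.
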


The coordinate descent algorithm cyclically updates the coefficients accordingly until convergence. The iteration stops when the $\ell_2$-difference of the coefficient values from the previous iteration and the updated values is less than a pre-specified tolerance level, say $10^{-6}$.

\subsubsection*{Positive definiteness of the covariance estimator}\label{ssec:spd-adjust}

%% asymptotic 
We give a brief remark on the positive definiteness of the estimator $\hat{\Sigma}$.
Consider the following parameter space:
\[
\Theta = \set{\theta: \Gamma_\theta \succ 0},
\]
where we denote $A_1 \succ A_2$ if $A_1 - A_2$ is positive definite. Suppose that $\theta^* \in \Theta = \set{\theta: \Gamma_\theta \succ 0}$. We note that $\Theta$ is non-empty since it contains $\theta$ with $\theta_1 >0$ and $\theta_j = 0$ for $j=1,\ldots, p$. Proposition $1$ of \cite{zou2017covariance} implies that the parameter space $\Theta$ is an open set under Assumption \ref{cond:B2-bd}. It follows that the unconstrained estimator $\hat{\theta}$ coincides with the the constrained estimator $\hat{\theta}_+$ in the asymptotic sense provided that $\hat{\theta}$ is a consistent estimator of $\theta^* \in \Theta$.

%% finite sample
However, in the finite sample case, the covariance estimator based on $\hat{\theta}$ may not be positive definite. In such cases, slight modifications can be made to obtain a positive definite estimator $\hat{\Sigma}_+$ close to $\hat{\Sigma}$. For example, one can adopt the alternating direction method of multiplier to solve a constrained problem as in \cite{xue2012positive,zou2017covariance}.  
Another approach involves performing the spectral decomposition of $\hat{\Sigma}$ and then approximating the nearest positive definite estimator in terms of the Frobenius metric by using only the positive eigenvalues.
Such algorithms may come with the drawback of imposing an additional computational burden or making the contribution of basis matrices less transparent.
In the majority of applications for the proposed covariance regression method, the coefficient of the identity matrix does not carry practical significance. Thus, we suggest a method of adding $\omega>0$ to the diagonal elements of $\hat{\Sigma}$ with $\omega$ being greater than the smallest eigenvalue of non-positive definite $\hat{\Sigma}$, which ensures the positive definiteness of the resulting covariance estimator. Through the results of the numerical studies in Section \ref{sec:simulation} and Section \ref{sec:realdata},  we found that such corrections are rarely needed. Furthermore, by choosing small $\omega$, we can guarantee that the difference between the Frobenius distance of the corrected estimator and the true covariance, and the distance between the initial estimator and the true covariance, is small enough.

\subsubsection*{Selection of the optimal complexity parameter}
The estimates are computed for an increasing sequence of the complexity parameters $\lambda_1 < \cdots < \lambda_m$, where $\lambda_m$ is sufficiently large. A possible choice of $\lambda_m$ is given by Proposition \ref{prop:lambda-max}. The result implies that we can find the maximum candidate value of $\lambda$ based on the Karush-Kuhn-Tucker (KKT) condition and thus the range $\{\lambda > \lambda_m\}$ no longer needs to be considered. 

\begin{proposition}\label{prop:lambda-max}
	Denote by $\hat{\theta}^\lambda$ be the minimizer of $\ell^\lambda$ associated with the complexity parameter $\lambda$. Define 
	\[
	\lambda_m = \max_{1 \le j \le p} \Av{\sum_{i=1}^n \ip{B_j}{Z_i}}.
	\]
	Then, we have $\hat{\theta}^\lambda = 0$ for $\lambda > \lambda_m$.
\end{proposition}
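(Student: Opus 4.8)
The plan is to read the conclusion off the first-order (Karush--Kuhn--Tucker) optimality conditions at the origin. Note first that $\ell^\lambda$ is convex, being the sum of the convex quadratic $\theta \mapsto \sum_{i=1}^n \fr{Z_i - \Gamma_\theta}^2$ and the convex penalty $P(\theta) = 2\lambda\sum_{j=2}^p \av{\theta_j}$, and that by Lemma~\ref{le:exist-unique} its minimizer $\hat{\theta}^\lambda$ exists and is unique. Hence it suffices to show that $0$ is a global minimizer of $\ell^\lambda$; for a convex function this is equivalent to $0 \in \partial\ell^\lambda(0)$, and uniqueness then forces $\hat{\theta}^\lambda = 0$. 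Since $\ell$ is differentiable everywhere, the subdifferential sum rule gives $\partial\ell^\lambda(0) = \nabla\ell(0) + \partial P(0)$, so the condition to verify is $-\nabla\ell(0) \in \partial P(0)$.

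Both ingredients are elementary. Differentiating $\ell$ and using the linearity $\Gamma_\theta = \sum_j \theta_j B_j$, one gets $\partial\ell(\theta)/\partial\theta_j = -2\sum_{i=1}^n \ip{B_j}{Z_i - \Gamma_\theta}$, so at $\theta = 0$ (where $\Gamma_0 = 0$) the $j$-th component of $-\nabla\ell(0)$ equals $2\sum_{i=1}^n \ip{B_j}{Z_i}$. On the other side, $P$ is separable in the coordinates, so $\partial P(0)$ is the product set whose $j$-th factor is $[-2\lambda,2\lambda]$ for each penalized index and $\{0\}$ for the unpenalized intercept. Matching coordinates, $-\nabla\ell(0) \in \partial P(0)$ holds precisely when $\av{\sum_{i=1}^n \ip{B_j}{Z_i}} \le \lambda$ for every penalized $j$. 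Since $\lambda > \lambda_m = \max_{1\le j\le p}\av{\sum_{i=1}^n \ip{B_j}{Z_i}}$, each of these inequalities holds, which yields the claim.

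This is a routine subgradient computation, so I do not expect a serious obstacle; the only points needing care are the coordinate-wise decomposition of $\partial P(0)$ for the separable $\ell_1$ term and the bookkeeping around the intercept. If $\theta_1$ is genuinely excluded from the penalty, its stationarity condition at $0$ reads $\sum_{i=1}^n \ip{I}{Z_i} = n\tr(\Gamma_0) = 0$, which is degenerate (the left side is $\sum_i \fr{Y_i}^2$); so the verbatim statement $\hat{\theta}^\lambda = 0$ is the one obtained when the intercept is also penalized (or profiled out), and under the unpenalized-intercept convention the conclusion should be read as ``$\hat{\theta}^\lambda_j = 0$ for all penalized $j$''. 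I would state the proposition for whichever penalty convention is in force and give the single KKT calculation above.
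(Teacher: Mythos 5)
Your proof is correct, and it rests on the same subgradient optimality condition that drives the paper's argument, but the execution differs: you verify directly that $0 \in \partial \ell^\lambda(0)$, i.e.\ $-\nabla\ell(0)\in\partial P(0)$, and then invoke uniqueness from Lemma~\ref{le:exist-unique}; the paper instead argues by contradiction, pairing a hypothetical nonzero stationary point $\hat\theta^\lambda$ against the optimality equation $\X^\top\X\hat\theta^\lambda=\X^\top\Z-\lambda\,\sign(\hat\theta^\lambda)$ and using the linear independence of the columns of $\X$ to force $(\hat\theta^\lambda)^\top\X^\top\X\hat\theta^\lambda>0$ while the right-hand pairing is $\le 0$. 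The two routes need the same hypothesis (linear independence of the basis matrices, which underlies both the column independence of $\X$ and the uniqueness lemma you cite), and your direct membership check is arguably cleaner bookkeeping. Your closing observation about the intercept is a genuine and valuable catch: the main text defines the penalty as $2\lambda\sum_{j=2}^p\av{\theta_j}$, under which the stationarity condition for $\theta_1$ at the origin would require $\sum_{i=1}^n\ip{I}{Z_i}=\sum_{i=1}^n\fr{Y_i}^2=0$, so the conclusion $\hat\theta^\lambda=0$ cannot hold verbatim; the paper's own proof silently switches to the fully penalized objective $\av{\Z-\X\theta}^2+2\lambda\av{\theta}_1$ (note also that $\lambda_m$ is defined with the maximum taken over all $1\le j\le p$, including $j=1$), so the proposition is correct only under that convention, exactly as you say.
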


Based on the result of Proposition \ref{prop:lambda-max}, $\lambda_m$ is determined, and then $\lambda_1$ is determined by the relation $\lambda_1 = \delta \lambda_m$, where $\delta$ is a small number. Then, the entire sequence of $m$ complexity parameters is generated by the evenly spaced points between $\lambda_1$ and $\lambda_m$ on the log scale. The estimator $\hat{\theta}^\lambda$ is computed along a sequence of complexity parameters $\set{\lambda_1, \ldots, \lambda_m}$. To improve the efficiency of an algorithm, we adopt a warm-start strategy \citep{friedman2007pathwise} in which each solution of the optimization problem for a given complexity parameter is used as an initial value for the next problem.

To choose the optimal complexity parameter $\lambda_\text{opt}$, we use a variant of the Akaike information criterion (AIC); see \cite{akaike1974new}. The AIC for $\lambda_l$ is defined as
\[
\text{AIC}_l =  \ell(\hat{\theta}^{\lambda_l}) + 2\, \card\set{S(\hat{\theta}^{\lambda_l})} / d,
\]
where
\[
S(\theta) = \set{j \colon \theta_j \ne 0,\ j=1,\ldots, p} \for \theta \in \R^p,
\]
and $\card\set{\cdot}$ denotes the cardinality of a set.
The optimal value for $\lambda$ is chosen as $\lambda_\text{opt} = \lambda_{\hat{l}}$, where $\hat{l} = \argmin_{1 \le l \le m} \text{AIC}_l$.

\section{Theoretical results}\label{sec:theory}

\subsection{Assumptions}\label{ssec:assumption}
This section presents assumptions to derive theoretical results on the convergence of the proposed estimator.
For a theoretical framework, we consider a more general matrix regression model 
\begin{equation}\label{eq:reg_model}
	Z_i = \alpha_0 I + \Sigma_G + \Sigma_R + \varepsilon_i \for i=1,\ldots, n,
\end{equation}
where $Z_i$ is a symmetric random matrix and $\varepsilon_i \in \R^{d \times d}$ are independent random matrices with $\E[\varepsilon_i] = 0$. Our case is a special case of \eqref{eq:reg_model}, i.e. $Z_i = Y_i Y_i^\top$. This regression model is useful because the estimation process does not depend on the likelihood.
Since the empirical risk is the sum of squares, one may have an urge to apply the standard lasso theory (see, for example, \cite{buhlmann2011statistics}) to $\vh(Y_i Y_i^\top)$. However, this approach only works under the independence assumption on the different (upper diagonal) components, which is not valid for the covariance regression. 
As a remedy, we characterize the tail behavior of the random error matrix in terms of the Laplace transform. In this paper, we adopt the sub-Gaussian or sub-exponential tail assumptions on the behavior of the random error. One may refer to \cite{tropp2012user} and \cite{wainwright2019high} for various probability inequalities based on a matrix generalization of the moment generating function under the assumptions.

We briefly provide preliminary notions. 
Given a symmetric random matrix $D$ in $\R^{d \times d}$, we use the matrix exponential map $e^D=\sum_{k=0}^\infty D^k / k!$ to define the moment generating function
\[
m_D(t) = \E[e^{tD}] = \sum_{k=0}^\infty \frac{t^k}{k!} \E[D^k],
\]
provided that the polynomial moments exist.
A zero-mean random matrix $D$ is said to be sub-Gaussian with positive definite matrix $W$ being the parameter if 
\[
m_D(t) \preceq e^{\frac{t^2 W}{2}} \for t \in \R,
\]
where we denote $A_1 \preceq A_2$ if $A_2 - A_1$ is positive semi-definite. 
When $D$ satisfies
\[
m_D(t) \preceq e^{\frac{t^2 W}{2}} \for \av{t} \le 1/a,
\]
it is said to be sub-exponential with parameters $(W,a)$. We note that any sub-Gaussian random matrix is sub-exponential with parameters $(W, 0)$.

Let $M_1$, $M_2$, and $M_3$ be fixed positive constants independent of $n$. We make the following assumption on the basis matrices.
\begin{assumption}\label{cond:B2-bd}
	$\max_{1 \le j \le p} \fr{B_j} \le \Ml{B2-bd}$.
\end{assumption}
\noindent
Assumption \ref{cond:B2-bd} implies that the basis matrices are normalized in terms of the Frobenius norm. This can be easily satisfied via standardization of covariates, which is commonly used in penalized regression. Tail behavior of the random error matrix is characterized by one of the following assumptions.
\begin{assumption}\label{cond:sub-Gaussian}
	For $i=1,\ldots, n$, the independent zero-mean random errors $\varepsilon_i$ satisfy 
	\[
	m_{\varepsilon_i}(t) \preceq e^{\frac{t^2 W}{2}} \for t \in \R.
	\]
\end{assumption}
\begin{assumption}\label{cond:Bernstein-moments}
	For $i=1,\ldots, n$, the independent zero-mean random errors $\varepsilon_i$ satisfy 
	\[
	\E[\varepsilon_i^k] \preceq \frac{1}{2}k! b^{k-2} \Var(\varepsilon_i),\ k=2,3,\ldots
	\]
	for some $b > 0$, where $\varepsilon_i^k$ is the $k$th power of the matrix $\varepsilon_i$ and $\Var(\varepsilon_i)$ is defined as $\E[\varepsilon_i^2]$.
\end{assumption}

\begin{remark}
	Assumption \ref{cond:sub-Gaussian} is the sub-Gaussian condition that covers a wide class of the random error. For example, if $\psi$ is a sub-Gaussian real-valued random variable and $D$ is a fixed matrix, or $\psi$ is a Rademacher variable and $D$ is a random matrix with a bounded spectral norm, $\psi D$ satisfies Assumption \ref{cond:sub-Gaussian}.
\end{remark}

\begin{remark}
	Assumption \ref{cond:Bernstein-moments} is also known as Bernstein moment condition that implies the sub-exponential condition. Various classes of random matrices belong to the sub-exponential class including the ones bounded in the spectral norm; see, for example, \cite{tropp2012user}. In particular, \cite{zhu2012short} showed that the Bernstein condition is satisfied for Wishart distribution. Thus, this condition covers the case of multivariate Gaussian observations.
	Moreover, if $Y$ is sub-Gaussian as a random vector with $\mathbb{E} YY^\top =\Sigma$, then $YY^\top - \Sigma$ also meets a similar condition. Our analytical framework includes this case with a slight modification in the proof; see Lemma \ref{le:subexp_YYt} and Lemma \ref{le:bernstein-ineq-subGaussian} in \ref{sec:proofs}.
	%. The proofs are postponed until Appendix \ref{ssec:verify_bernstein}.
\end{remark}

\subsection{Results}\label{ssec:theory_result}

We consider the complexity parameter used in the tuning parameter:
\[
\tau(\nu) = \sqrt{\frac{(u_p)^2 \Mr{B2-bd}^2 \log(2d / \nu)}{n}},
\]
where $\textcolor{black}{u_p} = \underset{1 \le j \le p}{\max}\sqrt{\rank(B_j)}$,
and $0 < \nu < 1$ is a user-defined constant. Then, when $\set{\varepsilon_i}_{i=1}^n$ satisfy Assumption \ref{cond:sub-Gaussian} with parameters $\set{W_i}_{i=1}^n$ with $\textcolor{black}{\sigma_{W,n}^2} = \frac{1}{n} \norm{\sum_{i=1}^n W_i}_2$, we suggest using
\begin{equation}\label{eq:lambda-gauss}
	\lambda = \sqrt{2} \sigma_{W,n} n \tau(\nu).
\end{equation}
For the sub-exponential case, where $\textcolor{black}{\sigma^2_{\varepsilon,n}} = \norm{\sum_{i=1}^n \Var(\varepsilon_i)}_2 / n$, we propose to use
\begin{equation}\label{eq:lambda}
	\lambda = n \pa{\sqrt{2} \sigma_{\varepsilon,n} \tau(\nu) + 2b\{\tau(\nu)\}^2}.
\end{equation}

Theorem \ref{th:nonasymptotic-bd-gauss} and Theorem \ref{th:nonasymptotic-bd} provide non-asymptotic upper bounds for the discrepancy between the estimator and underlying covariance, where the Frobenius norm is used
as a measure for goodness-of-fit.
\begin{theorem}\label{th:nonasymptotic-bd-gauss}
	Suppose Assumption \ref{cond:B2-bd} and \ref{cond:sub-Gaussian} hold and choose $\lambda$ as \eqref{eq:lambda-gauss}. Then, with probability at least $1-\nu$, we have
	\[
	\fr{\hat{\Sigma} - \Sigma }^2 \le 
	4\sqrt{2} \sigma_{W,n} u_p \av{\theta^*}_1 \sqrt{\frac{\Mr{B2-bd}^2 \log(2d / \nu)}{n}},
	\]
	where $\av{\cdot}_1$ is the $\ell_1$ norm of a Euclidean vector.
\end{theorem}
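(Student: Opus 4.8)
The plan is to follow the standard "basic inequality" route for $\ell_1$-penalized least squares, but carried out directly at the level of matrices so that no half-vectorization and no component-wise independence is needed. First I would use optimality of $\hat\theta$ for $\ell^\lambda$: since $\ell^\lambda(\hat\theta)\le\ell^\lambda(\theta^*)$, expanding the squared Frobenius norms $\ell(\theta)=\sum_i\fr{Z_i-\Gamma_\theta}^2$ and writing $Z_i=\Gamma_{\theta^*}+\varepsilon_i$ gives, after cancellation,
\[
\fr{\Gamma_{\hat\theta}-\Gamma_{\theta^*}}^2 \cdot n
\;\le\; 2\sum_{i=1}^n \ip{\varepsilon_i}{\Gamma_{\hat\theta}-\Gamma_{\theta^*}}
\;+\; 2\lambda\Big(\sum_{j\ge2}\av{\theta^*_j}-\sum_{j\ge2}\av{\hat\theta_j}\Big).
\]
(Here I am using $\sum_i n\fr{\Gamma_{\hat\theta}-\Gamma_{\theta^*}}^2 = n\fr{\hat\Sigma-\Sigma}^2$ since the predictor does not depend on $i$.) The cross term is linear in $\hat\theta-\theta^*$: expanding $\Gamma_{\hat\theta}-\Gamma_{\theta^*}=\sum_j(\hat\theta_j-\theta^*_j)B_j$ bounds it by $\big(\max_j\av{\sum_i\ip{\varepsilon_i}{B_j}}\big)\,\av{\hat\theta-\theta^*}_1$.

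The key probabilistic step is to control $\max_{1\le j\le p}\av{\sum_{i=1}^n\ip{B_j}{\varepsilon_i}}$ on an event of probability at least $1-\nu$. For a fixed $j$, $\ip{B_j}{\varepsilon_i}=\tr(B_j^\top\varepsilon_i)$ is a scalar; using the sub-Gaussian matrix assumption (Assumption \ref{cond:sub-Gaussian}) together with the variational/trace representation of $\tr(B_j\varepsilon_i)$ and the fact that $\rank(B_j)$ eigen-directions suffice, one gets a scalar sub-Gaussian bound with proxy variance of order $\rank(B_j)\,\fr{B_j}^2\,\norm{\sum_iW_i}_2$ — this is exactly where the quantities $u_p=\max_j\sqrt{\rank(B_j)}$, $\Mr{B2-bd}$, and $\sigma_{W,n}^2$ enter. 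A Chernoff bound on each of the $p\le d(d+1)/2$ coordinates, a union bound over the $2d$ (rather than $p$) effective directions coming from the rank bound, and the definition of $\tau(\nu)$ then yield
\[
\max_{1\le j\le p}\Av{\sum_{i=1}^n\ip{B_j}{\varepsilon_i}}\;\le\;\sqrt{2}\,\sigma_{W,n}\,n\,\tau(\nu)\;=\;\lambda
\]
with probability at least $1-\nu$; I expect the cleanest way to get this is to invoke the matrix-MGF machinery of \cite{tropp2012user}/\cite{wainwright2019high} exactly as advertised in Section \ref{ssec:assumption}, possibly isolated as a lemma in \ref{sec:proofs}.

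On this event the cross term is at most $\lambda\av{\hat\theta-\theta^*}_1$, so the basic inequality becomes
\[
n\,\fr{\hat\Sigma-\Sigma}^2\;\le\;2\lambda\av{\hat\theta-\theta^*}_1+2\lambda\Big(\av{\theta^*}_1-\sum_{j\ge2}\av{\hat\theta_j}\Big)\;\le\;4\lambda\av{\theta^*}_1+2\lambda\av{\hat\theta_1-\theta^*_1},
\]
and I would close the argument with a slow-rate bound rather than a cone/compatibility argument: note $\av{\hat\theta-\theta^*}_1\le\av{\hat\theta}_1+\av{\theta^*}_1$, or more carefully bound the unpenalized intercept coordinate so that the right side is $O(\lambda\av{\theta^*}_1)$, giving $\fr{\hat\Sigma-\Sigma}^2\le C\lambda\av{\theta^*}_1/n$; substituting $\lambda=\sqrt2\,\sigma_{W,n}n\tau(\nu)$ and the definition of $\tau(\nu)$ produces exactly $4\sqrt2\,\sigma_{W,n}u_p\av{\theta^*}_1\sqrt{\Mr{B2-bd}^2\log(2d/\nu)/n}$. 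The main obstacle is the tail bound for $\max_j\av{\sum_i\ip{B_j}{\varepsilon_i}}$: turning the matrix sub-Gaussian hypothesis into a sharp scalar concentration for the \emph{inner product} $\tr(B_j^\top\varepsilon_i)$ with the right dependence on $\rank(B_j)$ and $\norm{\sum_iW_i}_2$ is the delicate part, since a naive bound would lose a factor of $d$; handling the (possibly indefinite) $B_j$ by splitting into positive and negative parts or by a rank-based reduction to $\le 2d$ directions is what makes the $\log(2d/\nu)$ (not $\log(p/\nu)$) appear.
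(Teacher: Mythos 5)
Your deterministic skeleton is exactly the paper's: the basic inequality from $\ell^\lambda(\hat\theta)\le\ell^\lambda(\theta^*)$, the slow-rate bound $\av{\hat\theta-\theta^*}_1+\av{\theta^*}_1-\av{\hat\theta}_1\le 2\av{\theta^*}_1$, and an event on which $\lambda$ dominates the noise, giving $\fr{\hat\Sigma-\Sigma}^2\le 4\lambda\av{\theta^*}_1/n$. The gap is in the probabilistic step, which you yourself flag as "the delicate part" but do not resolve. You propose to show that each scalar $\tr(B_j^\top\varepsilon_i)$ is sub-Gaussian with variance proxy $\asymp\rank(B_j)\fr{B_j}^2\norm{\sum_iW_i}_2$ directly from Assumption \ref{cond:sub-Gaussian} and then union-bound over $j$. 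Two problems: (i) the hypothesis $m_{\varepsilon_i}(t)\preceq e^{t^2W_i/2}$ is a semidefinite-order bound on the \emph{matrix} MGF, which via the Tropp/Wainwright transfer argument controls the extreme eigenvalues of $\sum_i\varepsilon_i$, but does not obviously yield a scalar sub-Gaussian bound for the linear functional $\tr(B_j^\top\varepsilon_i)$ with that particular variance proxy — this step is not a known lemma and would need a genuine proof; (ii) a union bound over $j=1,\ldots,p$ produces $\log(p/\nu)$ (with $p$ as large as $d(d+1)/2$), not the $\log(2d/\nu)$ in the statement, and your suggested "reduction to $\le 2d$ effective directions" is not an argument.

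The paper avoids both issues by never union-bounding over coordinates. It first proves the deterministic H\"older-type inequality $\av{\ip{B_j}{D}}\le\norm{B_j}_*\norm{D}_2\le\sqrt{\rank(B_j)}\,\fr{B_j}\,\norm{D}_2$ (Lemma \ref{le:ip}), which bounds \emph{all} $p$ cross terms $\av{\ip{B_j}{\sum_i\varepsilon_i}}$ simultaneously by $u_p\Mr{B2-bd}\norm{\sum_i\varepsilon_i}_2$ (Lemmas \ref{le:ip-bd}, \ref{le:risk-ub}). The only randomness left is the single spectral norm $\norm{\sum_i\varepsilon_i}_2$, which is controlled once by the matrix Hoeffding inequality (Lemma \ref{le:hoef-ineq}); the prefactor $2\,\rank(\sum_iW_i)\le 2d$ there is the true source of the $\log(2d/\nu)$. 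If you replace your per-coordinate concentration plus union bound with this rank/nuclear-norm duality followed by a single matrix Hoeffding application, the rest of your argument goes through and matches the stated constants.
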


\begin{theorem}\label{th:nonasymptotic-bd}
	Suppose Assumption \ref{cond:B2-bd} and \ref{cond:Bernstein-moments} hold and choose $\lambda$ as \eqref{eq:lambda}. Then, with probability at least $1-\nu$, we have
	\[
	\fr{\hat{\Sigma} - \Sigma }^2 \le 
	4\sqrt{2} \sigma_{\varepsilon,n} u_p \av{\theta^*}_1 \sqrt{\frac{\Mr{B2-bd}^2 \log(2pd / \nu)}{n}} +
	8b (u_p)^2 \av{\theta^*}_1 \frac{\Mr{B2-bd}^2 \log(2d / \nu)}{n}.
	\]
\end{theorem}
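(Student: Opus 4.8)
The plan is to run a \emph{slow-rate} lasso argument: since we only target the prediction-type error $\fr{\hat{\Sigma}-\Sigma}^2$, no compatibility or restricted-eigenvalue condition on $\set{B_j}_{j=1}^p$ is required, and everything reduces to a deterministic basic inequality together with a probabilistic bound on the noise functional $\max_{1\le j\le p}\Av{\ip{B_j}{T}}$, where $T:=\sum_{i=1}^n\varepsilon_i$. With $\hat{\theta}$ well defined by Lemma~\ref{le:exist-unique}, I would start from the optimality relation $\ell^\lambda(\hat{\theta})\le\ell^\lambda(\theta^*)$; substituting the model $Z_i=\Gamma_{\theta^*}+\varepsilon_i$ and expanding
\[
\fr{Z_i-\Gamma_\theta}^2=\fr{\varepsilon_i}^2-2\ip{\varepsilon_i}{\Gamma_{\theta-\theta^*}}+\fr{\Gamma_{\theta-\theta^*}}^2
\]
gives, after the $\fr{\varepsilon_i}^2$ terms cancel,
\[
n\fr{\Gamma_{\hat{\theta}-\theta^*}}^2 \;\le\; 2\ip{\Gamma_{\hat{\theta}-\theta^*}}{T} + 2\lambda\av{\theta^*}_1 - 2\lambda\av{\hat{\theta}}_1 ,
\]
the quadratic term being exactly $n\fr{\Gamma_{\hat{\theta}-\theta^*}}^2$ since $\Gamma_\theta$ does not depend on $i$. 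I carry out the argument with the $\ell_1$-penalty acting on all $p$ coordinates, for which the stated bound in $\av{\theta^*}_1$ is exact; the unpenalized-intercept variant of Section~\ref{ssec:model} follows with an immediate modification, the intercept deviation being controlled on the same event used below.

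Next, by $\ell_1$--$\ell_\infty$ duality,
\[
\Av{\ip{\Gamma_{\hat{\theta}-\theta^*}}{T}}=\Av{\sum_{j=1}^p(\hat{\theta}_j-\theta_j^*)\ip{B_j}{T}}\le \av{\hat{\theta}-\theta^*}_1\,\max_{1\le j\le p}\Av{\ip{B_j}{T}} ,
\]
so on the event $\mc{E}_\lambda=\{\max_{1\le j\le p}\Av{\ip{B_j}{T}}\le\lambda\}$ the cross term is at most $\lambda\av{\hat{\theta}-\theta^*}_1\le\lambda\av{\hat{\theta}}_1+\lambda\av{\theta^*}_1$. Substituting into the previous display, the $\av{\hat{\theta}}_1$ terms cancel and one gets $\fr{\hat{\Sigma}-\Sigma}^2=\fr{\Gamma_{\hat{\theta}-\theta^*}}^2\le 4\lambda\av{\theta^*}_1/n$ on $\mc{E}_\lambda$. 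It therefore only remains to verify that $\mc{E}_\lambda$ has probability at least $1-\nu$ when $\lambda$ is taken as in \eqref{eq:lambda}.

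For that probabilistic step I would reduce the $p$ scalar functionals to a single spectral-norm quantity: each $B_j$ is symmetric, so $\Av{\ip{B_j}{T}}=\Av{\tr(B_jT)}\le\nn{B_j}\,\norm{T}_2$ by Hölder's inequality for Schatten norms, and $\nn{B_j}\le\sqrt{\rank(B_j)}\,\fr{B_j}\le u_p\Mr{B2-bd}$ by Cauchy--Schwarz and Assumption~\ref{cond:B2-bd}; hence $\max_{1\le j\le p}\Av{\ip{B_j}{T}}\le u_p\Mr{B2-bd}\,\norm{T}_2$. Under the Bernstein moment condition (Assumption~\ref{cond:Bernstein-moments}), the matrix Laplace-transform machinery yields a matrix Bernstein inequality (stated as a lemma in \ref{sec:proofs})
\[
\pr\!\left(\norm{T}_2\ge t\right)\le 2d\exp\!\left(-\frac{t^2}{2\,(n\sigma_{\varepsilon,n}^2+bt)}\right),
\]
and equating the right-hand side with $\nu$ and solving the resulting quadratic in $t$ gives $\norm{T}_2\le \sqrt{2n}\,\sigma_{\varepsilon,n}\sqrt{\log(2d/\nu)}+2b\log(2d/\nu)$ with probability at least $1-\nu$; combined with the rank reduction above (and a union bound over $j=1,\dots,p$ if one prefers to concentrate each $\ip{B_j}{T}$ separately, which is what contributes the $\log(2pd/\nu)$ factor to the leading term), this shows that $\lambda=n(\sqrt2\,\sigma_{\varepsilon,n}\tau(\nu)+2b\{\tau(\nu)\}^2)$ dominates $\max_j\Av{\ip{B_j}{T}}$ on an event of probability at least $1-\nu$, i.e.\ $\pr(\mc{E}_\lambda)\ge 1-\nu$. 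Substituting this $\lambda$ into $4\lambda\av{\theta^*}_1/n$ then produces the two displayed terms. Theorem~\ref{th:nonasymptotic-bd-gauss} follows along the same lines, with a matrix sub-Gaussian (Hoeffding-type) tail replacing matrix Bernstein, which deletes the quadratic-in-$\tau$ term.

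I expect the concentration step to be the crux. The naive route of applying classical vector-lasso theory to $\vh(Y_iY_i^\top)$ fails because the entries of $Y_iY_i^\top$ are dependent, so one genuinely needs the matrix moment-generating-function machinery: a matrix Bernstein bound for $\norm{T}_2$ valid under the matrix moment condition of Assumption~\ref{cond:Bernstein-moments}, together with the passage from $\max_j\Av{\ip{B_j}{T}}$ to $\norm{T}_2$ carrying the sharp rank factor $u_p=\max_j\sqrt{\rank(B_j)}$. The remaining ingredients---the basic inequality, the $\ell_1$--$\ell_\infty$ duality, inverting the Bernstein tail into the explicit two-term $\lambda$ of \eqref{eq:lambda}, and the minor bookkeeping around the unpenalized intercept---are routine.
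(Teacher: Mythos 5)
Your proof follows essentially the same route as the paper's: the same basic inequality from $\ell^\lambda(\hat\theta)\le\ell^\lambda(\theta^*)$, the same reduction $\av{\ip{B_j}{T}}\le\sqrt{\rank(B_j)}\,\fr{B_j}\,\norm{T}_2\le u_p\Mr{B2-bd}\norm{T}_2$ (the paper's Lemma~\ref{le:ip} and Lemma~\ref{le:ip-bd}), and the same matrix Bernstein tail for $\norm{\sum_i\varepsilon_i}_2$ (Lemmas~\ref{le:bernstein-ineq} and~\ref{le:large-dev}), culminating in $\fr{\hat\Sigma-\Sigma}^2\le 4\lambda\av{\theta^*}_1/n$. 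One minor remark: your spectral-norm route actually yields $\log(2d/\nu)$ in \emph{both} terms, so the parenthetical union bound over $j$ invoked to explain the $\log(2pd/\nu)$ factor in the leading term is unnecessary --- that factor appears to be a typographical artifact of the paper, and your (stronger) bound implies the stated one.
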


\begin{remark}
	The upper bounds in Theorem \ref{th:nonasymptotic-bd-gauss} and Theorem \ref{th:nonasymptotic-bd} are non-asymptotic results, and $\nu$ represents the trade-off between the risk upper bound and the probability bound.
\end{remark}

%Theorem \ref{rem:asymp-rate} is an immediate consequence of Theorem \ref{th:nonasymptotic-bd-gauss} and Theorem \ref{th:nonasymptotic-bd}. 
%We adopt the standard Vinogradov asymptotic notations, $a_n \ll b_n$ as $n \to \infty$ for two positive real sequences $\set{a_n}$ and $\set{b_n}$.

%\begin{theorem}\label{rem:asymp-rate}
%Assume that $d = d(n) \to \infty$ as $n \to \infty$. The following asymptotic results hold with probability tending to $1$ as $n \to \infty$.

A couple of comments are in order about the results of  Theorem \ref{th:nonasymptotic-bd-gauss} and Theorem \ref{th:nonasymptotic-bd}. 
{\color{black}
	\begin{remark}\label{rem:asymp-rate}
		Suppose the rank of the predetermined basis is low and $\beta^*$, the coefficients for the remainder matrices, is sparse. Moreover, the basis matrices that consist of the remainder matrix can be taken to be low-rank (see a rank-one basis problem, e.g. \cite{Nakatsukasa:2017}). Then, $u_p \av{\theta^*}_1$ does not depend on $d$. In this case, we can derive the following asymptotic results with probability tending to $1$ as $d = d(n) \to \infty$, $n \to \infty$.
		\begin{itemize}
			\item Suppose Assumption \ref{cond:B2-bd} and \ref{cond:sub-Gaussian} hold and $\lambda$ is chosen as \eqref{eq:lambda-gauss}. Then, we have
			%\red{
			\[
			\fr{\hat{\Sigma} - \Sigma }^2 \le \Ml{rate-gauss} u_p \av{\theta^*}_1 \sqrt{\frac{\log d}{n}}.
			\]
			%}
			
			\item 
			Suppose Assumption \ref{cond:B2-bd} and \ref{cond:Bernstein-moments} hold and $\lambda$ is chosen as \eqref{eq:lambda}. Then, we have
			%\red{
			\[
			\fr{\hat{\Sigma} - \Sigma }^2 \le \Ml{rate} u_p \av{\theta^*}_1 \max\set{\sqrt{\frac{\log d}{n}} ,\frac{ {u_p b} \log d}{n}}.
			\]
			%}
		\end{itemize}
		
	\end{remark}
}

%\begin{remark}
%The condition $\av{\theta^*}_1 \log d \ll n$ is likely to hold in a broad class of covariance estimation problems, especially when the underlying covariance is sparsely supported, thereby $\av{\theta^*}_1$ being close to a constant.
%\end{remark}

{
	\begin{remark}\label{rem:}
		%	{\color{red}However, $b=\text{tr}(\Sigma)$ for $\epsilon \sim N(0, \Sigma)$, and thus $b=O(d)$ could affect the convergence rate?}
		\cite{zou2017covariance} showed that $n \fr{\hat{\Sigma} - \Sigma }^2$ converges to some quadratic form in distribution under some eigenvalue conditions. Seemingly, the asymptotic convergence rate given in Remark \ref{rem:asymp-rate} is sub-optimal, but there is a critical difference from the previous result. The regression framework \eqref{eq:reg_model} covers much more general error structures, and the error matrix in \eqref{eq:reg_model} has $O(d^2)$ random components. However, the matrix $Z_i = Y_i Y_i^\top$ from the multivariate normal $Y_i$ only requires to handle $O(d)$ components. Consequently, the proposed analytic framework allows a more general study in matrix regression in exchange for sub-optimal rate in this specific case.
	\end{remark}
}

%{\color{red} valid?
%\begin{remark}\label{rem:generalization}
%We do not use the orthonormality of basis matrices in the remainder part and establish the estimation consistency based solely on conservative bounds. However, we believe 
%
%We will leave the proof of Lemma \ref{le:ip} in \ref{sec:proofs} in its current form, which can be used to achieve tighter convergence when upper bounds for the nuclear norm, Frobenius norm, maximum element, etc., of the basis matrices are provided.  (will be modified)
%\end{remark}
%}

\section{Simulation study}\label{sec:simulation}

We conduct numerical studies based on simulated data to demonstrate the finite
sample performance of the proposed method. To imitate the co-expression analysis, we construct the basis matrices using network information represented by an adjacency matrix.

\subsection{Data generation}
We consider three types of adjacency matrices and examine $18$ distinct scenarios depending on the number of variables $d$ and the highest order $s$ of the adjacency matrices. For the numerical studies, we set $n=50$ and consider $d=20, 50, 80$ and $s = 2, 3$ for each adjacency matrix type. The tuning of the complexity parameter $\lambda$ was performed using the adjusted AIC, as explained in Section \ref{ssec:implementation}. The estimation performance of the proposed estimator is compared with that of the covariance regression estimator proposed by \cite{lan2018covariance}, titled by LCM. Since the core of the covariance regression model lies in estimating and interpreting the coefficients to understand the portion of the covariance structure attributed to the given basis matrix, we do not consider other methods, such as thresholding-based covariance estimation, in the comparison.

The data generation scheme is based on the regression model \eqref{eq:reg_model}
\[
Z_i = \Sigma_A + \Sigma_R + \varepsilon_i \for i=1,\ldots, n,
\]
where $\Sigma_A$ is the covariance matrix associated with the adjacency matrix and $\varepsilon_i$ is from centered Wishart distribution, i.e., $\varepsilon_i = \tilde{\varepsilon}_i / d - \Sigma_e$, where $\tilde{\varepsilon}_i \sim W_d(\Sigma_e, d)$. We set $\Sigma_e = \sigma_e^2 \text{I}$, $\sigma_e^2=1$, and $\Sigma_A = \alpha_0^* I + \sum_{j=1}^s \alpha_j^* A^j$, where $A$ is an adjacency matrix.
We consider three types of the adjacency matrix (\cite{Tan:2014}):
\begin{itemize}
	\item \textbf {(Type 1) Network with hub nodes:} Generate a lower triangular matrix with non-hub nodes having a connection probability of $0.02$ and hub nodes having a connection probability of $0.7$. The adjacency matrix is created by symmetrizing it. The number of hub nodes is set to $2,3,4$ for $d = 20, 50, 80$, respectively.
	\item \textbf {(Type 2) Network with two connected components and hub nodes:} Two adjacency matrices are generated as in Type 1 of dimension $d/2$. Then, connect them in a block-diagonal form to create the adjacency matrix. The number of hub nodes is set to $(1,1)$, $(1,1)$, $(2,2)$  for $d = 20, 50, 80$, respectively.
	\item \textbf {(Type 3) Random graph:} Generate a lower triangular matrix with each node having a connection probability $0.2$. The adjacency matrix is created by symmetrizing it.
\end{itemize}
The remainder covariance matrix $\Sigma_R = \sum_{j=1}^q \beta_j^* F_j$, where $\{F_j\}_{j=1}^q$ are the matrix basis orthogonal to $\{I, A, \ldots, A^s\}$. We set $\alpha^*=10 * \boldsymbol{1}_s$ with $\boldsymbol{1}_s \in \R^s$ being the vector of ones, 	$\beta^* = 50 * (1, -1, -1, 1, 0, \ldots, 0)$, and $\alpha_0^* = 300$. We set $\alpha_0^*$ relatively large to easily ensure  positive definiteness in the covariance generation for simulation. It does not have much impact on measuring the overall estimation performance.

\subsection{Results}

% latex table generated in R 4.4.1 by xtable 1.8-4 package
% Wed Dec 11 13:24:03 2024
\begin{table}[H]
	\centering
		\begin{tabular}{lcccc}
			\hline
			Adj. Type & Dimension & Adj. Order & LCSM & LCM \\ 
			\hline
			Type 1 & d = 20 & s = 2 & 18.984 (0.017) & 40.125 (0.023) \\ 
			&  & s = 3 & 18.998 (0.013) & 40.131 (0.024) \\ 
			& d = 50 & s = 2 & 48.964 (0.012) & 53.828 (0.010) \\ 
			&  & s = 3 & 49.005 (0.007) & 53.865 (0.007) \\ 
			& d = 80 & s = 2 & 78.963 (0.010) & 80.918 (0.009) \\ 
			&  & s = 3 & 79.011 (0.009) & 80.965 (0.008) \\ 
			\hline
			Type 2 & d = 20 & s = 2 & 18.995 (0.012) & 40.132 (0.021) \\ 
			&  & s = 3 & 18.997 (0.011) & 40.132 (0.024) \\ 
			& d = 50 & s = 2 & 48.983 (0.009) & 53.845 (0.008) \\ 
			&  & s = 3 & 49.000 (0.004) & 53.860 (0.005) \\ 
			& d = 80 & s = 2 & 78.976 (0.009) & 80.930 (0.008) \\ 
			&  & s = 3 & 79.001 (0.003) & 80.954 (0.003) \\ 
			\hline
			Type 3 & d = 20 & s = 2 & 18.921 (0.021) & 40.097 (0.012) \\ 
			&  & s = 3 & 19.089 (0.030) & 40.170 (0.015) \\ 
			& d = 50 & s = 2 & 48.952 (0.019) & 53.823 (0.008) \\ 
			&  & s = 3 & 49.205 (0.020) & 53.981 (0.011) \\ 
			& d = 80 & s = 2 & 78.970 (0.005) & 80.922 (0.004) \\ 
			&  & s = 3 & 79.706 (0.019) & 81.253 (0.017) \\ 
			\hline
		\end{tabular}
	\caption{The average FE values and the standard errors (in parentheses) of LCSM (proposed method) and LCM with $1000$ replications simulation.}
	\label{tab:sim-fe}
\end{table}
% beta MSE result
% latex table generated in R 4.3.2 by xtable 1.8-4 package
% Thu Nov 23 12:53:43 2023
\begin{table}[H]
	\centering
		\begin{tabular}{lcccc}
			\hline
			Adj. Type & Dimension & Adj. Order & LCSM & LCM \\
			\hline
			Type 1 & d = 20 & s = 2 & 45.063 (0.044) & 1607.577 (0.044) \\ 
			&  & s = 3 & 45.299 (0.187) & 1434.055 (0.188) \\ 
			& d = 50 & s = 2 & 299.365 (0.009) & 1861.868 (0.008) \\ 
			&  & s = 3 & 267.020 (0.007) & 1655.998 (0.009) \\ 
			& d = 80 & s = 2 & 778.805 (0.012) & 2341.319 (0.012) \\ 
			&  & s = 3 & 694.154 (0.016) & 2083.157 (0.017) \\ 
			\hline
			Type 2 & d = 20 & s = 2 & 49.103 (0.291) & 1611.617 (0.291) \\ 
			&  & s = 3 & 54.114 (0.295) & 1442.932 (0.298) \\ 
			& d = 50 & s = 2 & 299.798 (0.005) & 1862.308 (0.006) \\ 
			&  & s = 3 & 267.543 (0.068) & 1656.407 (0.067) \\ 
			& d = 80 & s = 2 & 779.327 (0.009) & 2341.830 (0.010) \\ 
			&  & s = 3 & 693.563 (0.005) & 2082.477 (0.005) \\ 
			\hline
			Type 3 & d = 20 & s = 2 & 44.430 (0.006) & 1606.954 (0.006) \\ 
			&  & s = 3 & 42.029 (0.028) & 1431.073 (0.027) \\ 
			& d = 50 & s = 2 & 299.223 (0.011) & 1861.812 (0.006) \\ 
			&  & s = 3 & 285.223 (0.089) & 1664.650 (0.045) \\ 
			& d = 80 & s = 2 & 779.333 (0.005) & 2341.666 (0.004) \\ 
			&  & s = 3 & 950.279 (0.063) & 2121.770 (0.083) \\ 
			\hline
		\end{tabular}
	\caption{The average MSE values for coefficients estimation and the standard errors (in parentheses) of LCSM (proposed method) and LCM with $1000$ replications simulation.}
	\label{tab:sim-beta}
\end{table}

\begin{figure}[H]
	\begin{subfigure}{\textwidth}
		\centering
		\includegraphics[width=.8\linewidth]{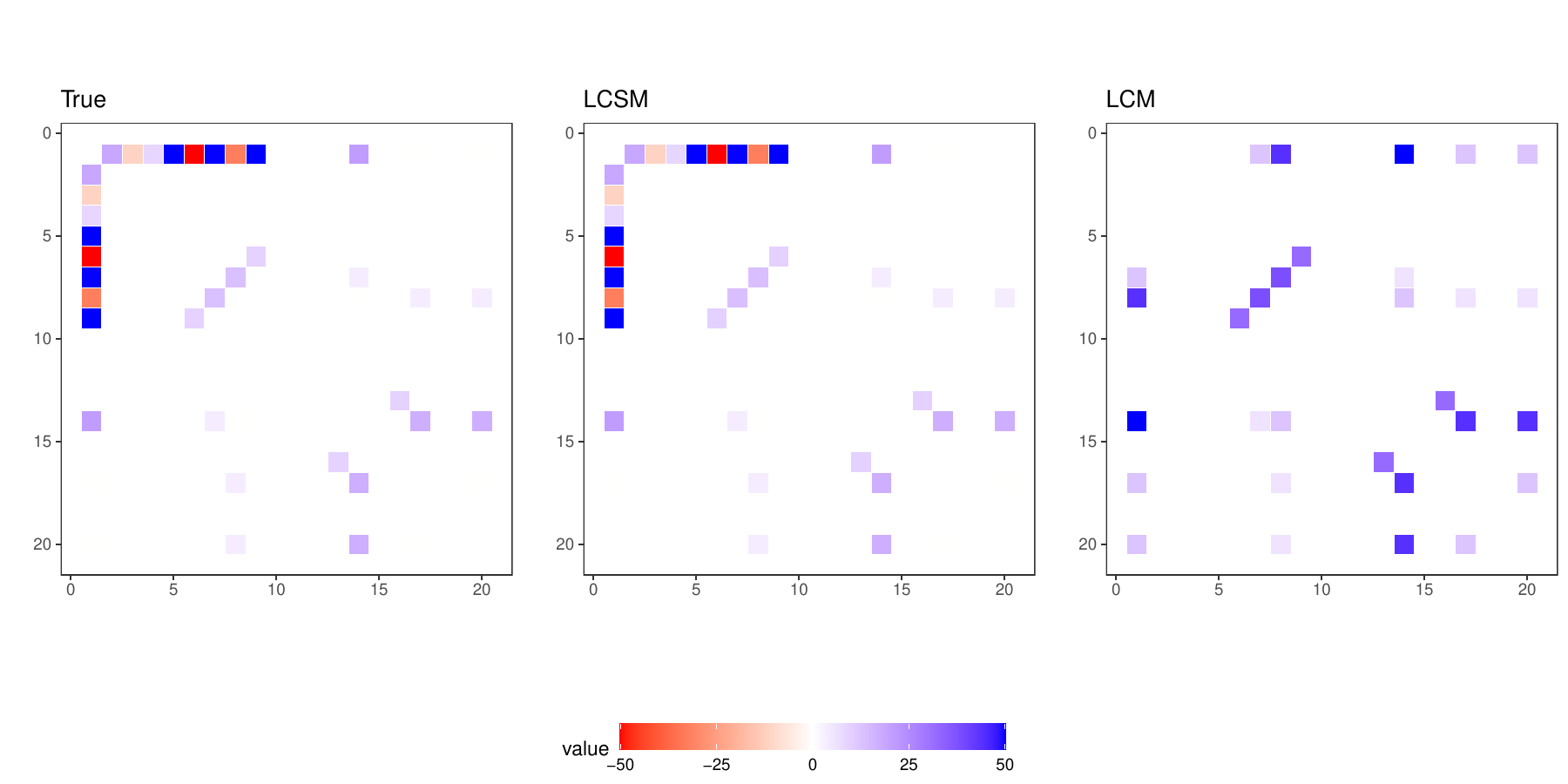}
	\end{subfigure}%
	\\
	\begin{subfigure}{\textwidth}
		\centering
		\includegraphics[width=.8\linewidth]{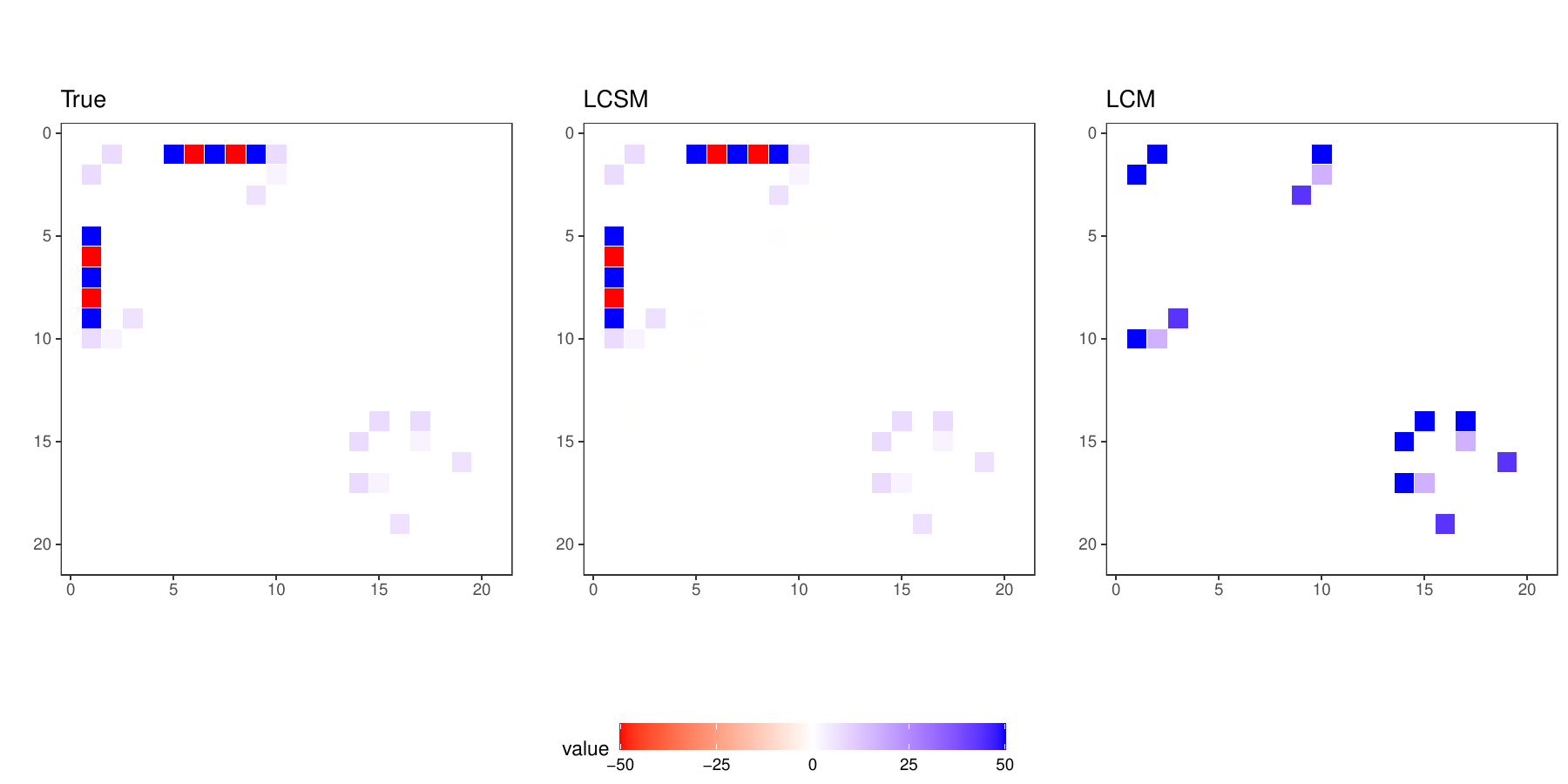}
	\end{subfigure}%
	\\
	\begin{subfigure}{\textwidth}
		\centering
		\includegraphics[width=.8\linewidth]{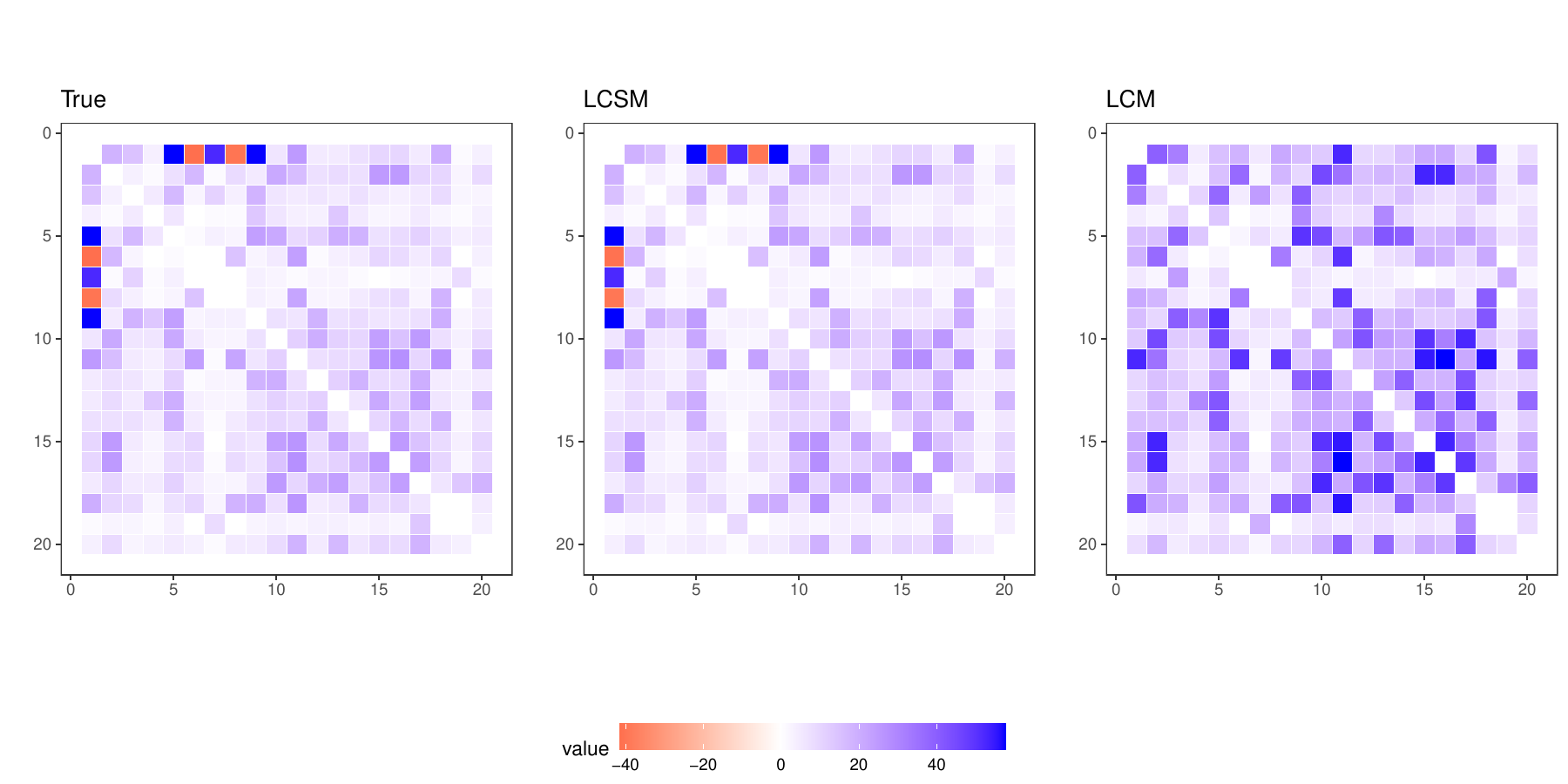}
	\end{subfigure}%
	\caption{Randomly selected estimates from LCSM (proposed method) and LCM methods. Each row associated with each type of the adjacency matrices contains three plots corresponding to the true covariance matrix and its estimates based on two methods.}
	\label{fig:sim-plot}
\end{figure}

Through $1000$ replications, we compute the scaled Frobenius error (FE) for covariance estimation and the mean squares error (MSE) for regression coefficients estimation. These quantities are defined as
\[
\text{FE}(\hat{\Sigma}) 
=
\frac{1}{\sqrt{d}}\fr{\hat{\Sigma} - \Sigma_A - \Sigma_R} 
\]
and 
\[
\text{MSE}(\hat{\theta}) 
=
\frac{1}{\card\set{j:\theta_j^* \ne 0}} 
\sum_{j:\theta_j^* \ne 0}
(\theta_j^* - \hat{\theta}_j)^2.
\]

Table \ref{tab:sim-fe} and Table \ref{tab:sim-beta} summarize the simulation results. In all considered scenarios, the proposed method (LCSM), when compared to the covariance regression estimator (LCM) studied in \cite{lan2018covariance}, demonstrates improved performance in terms of both the FE of the covariance estimator and the MSE of the coefficients estimator.
Figure \ref{fig:sim-plot} presents the estimates obtained from a randomly selected dataset for $d=20$ and $s=3$. To enhance the visibility of the covariance structure, the diagonal elements are excluded from the visualization. It is evident that the proposed penalized estimation methodology successfully detects the structure of the true covariance.
However, when using only the adjacency matrix, the covariance regression estimator based on the existing method fails to detect the structure caused by the remainder covariance. Additionally, it shows a slight overestimation of covariance components.
Simulation results indicate that the proposed method can significantly improve existing covariance regression estimators when there exists a remainder portion of the covariance that is not entirely explained by the adjacency structure.

Additionally, we measure the computation time of the proposed method to verify its scalability.
The information regarding computational details is as follows. Table \ref{tab:computing time} summarizes the average computation times and their standard errors for each simulation scenario based on $1000$ iterations. All timings were carried out on a Apple M1 MAX 3.23 GHz processor. The results in Table \ref{tab:computing time} suggest that the LCSM method is sufficiently scalable to larger datasets. Additionally, in the $1000$ iterations performed for each type, we did not encounter any cases where the positive definite correction described in Section \ref{ssec:implementation} was needed.

\begin{table}[!htb] 
	\centering 
	\resizebox{\textwidth}{!}{% 
		\begin{tabular}{l|ccc|ccc|ccc} 
			\hline 
			\textbf{Adj. Type} & \multicolumn{3}{c|}{\textbf{d = 20}} & \multicolumn{3}{c|}{\textbf{d = 50}} & \multicolumn{3}{c}{\textbf{d = 80}} \\
			& \textbf{s = 2} & \textbf{s = 3} & & \textbf{s = 2} & \textbf{s = 3} & & \textbf{s = 2} & \textbf{s = 3} & \\
			\hline 
			Type 1 & 0.127 (0.017) & 0.140 (0.019) & & 4.004 (0.210) & 4.188 (0.296) & & 27.475 (1.229) & 28.315 (1.423) & \\ 
			Type 2 & 0.114 (0.007) & 0.116 (0.011) & & 3.895 (0.207) & 4.436 (0.320) & & 26.571 (1.254) & 30.448 (1.865) & \\ 
			Type 3 & 0.123 (0.006) & 0.129 (0.008) & & 3.692 (0.210) & 3.567 (0.232) & & 22.860 (1.141) & 22.369 (1.311) & \\ 
			\hline 
		\end{tabular} 
	} 
	\caption{The average run times (seconds) and the standard errors (in parentheses) of LCSM (proposed method) with $1000$ replications simulation.}
	\label{tab:computing time} 
\end{table}

\section{Application to co-expression network analysis}\label{sec:realdata}

\subsection{Data description and modeling scheme}

%% gene co-expression network analysis
We apply our covariance regression method to estimation of co-expression network using gene-expression data obtained from the livers of mice in a specific F2 intercross. The data have been previously analyzed by \cite{Ghazalpour:2006} and used as an illustrative example in the R package \texttt{WCGNA}. For a more comprehensive understanding of the data, we refer readers to the original reference. Using the data, the existing work such as \cite{Langfelder:2007} aimed to identify clusters of genes, named eigen-genes, based on the so-called topological overlap measure (TOM) consisting of the adjacency matrix $A$ and the second-order neighborhood information $A^2$. Building upon this approach, we focus on enhancing their methodology by explicitly estimating the effect of the neighborhoods.

%% pre-processing
We begin with introducing the gene-expression data $Y = [Y_1, \ldots, Y_n]^\top$ measured over $\tilde{p}=3600$ genes from either $n_1=135$ female or $n_2=124$ male samples.
As part of data preprocessing, we remove genes with smaller coefficients of variation (CV), computed as a ratio of the sample standard deviation to the sample mean. Then, we center each gene by its empirical mean.
We adopt the pipeline of the package \texttt{WCGNA} (\cite{Langfelder:2008}) to extract the consensus network structure of two groups of samples, which plays a role of the adjacency matrix in our covariance regression. The first step of the pipeline involves computing Pearson correlation coefficients denoted by $R_\beta = (r_{k\ell})_{k,\ell}$ where $r_{k\ell} = \rho_{k\ell}^\beta$, $1 \le k,\ell \le d$. Here, we set $\beta=6$. After replacing all diagonals of $R_\beta$ by zero, we obtain the weighted adjacency matrix $A$.
The next step of \texttt{WCGNA} is to exploit the TOM to construct the similarity matrix for each group. \cite{Langfelder:2007} suggested using the quantile transformation to derive a common network structure from multiple similarity matrices. The consensus network is finally used to detect modules (gene sets/clusters), denoted as $\{\mathcal{G}_1, \ldots, \mathcal{G}_K\}$, through a clustering method (e.g. hierarchical clustering). We refer readers to the original reference (\cite{Langfelder:2007}) for more details of the procedure.

Now, we focus on a specific module, say the $k$-th, for each group of mice. We perform LCSM based on the common basis matrix defined by $A^{(k)} = ( r_{k\ell}^{(F)} \wedge r_{k\ell}^{(M)})_{k,\ell \in \mathcal{G}_k}$, where $r_{k\ell}^{(F)}$ (or $r_{k\ell}^{(M)}$) represents the power of Pearson correlation coefficient in the female (or male) group. Note that the minimum corresponds to a special case of the quantile transformation (with a level of quantile being zero).
%Thus, $A^{(k)}$ the principal submatrix of the consensus adjacency matrix on the module $\mathcal{G}_k$. The similar analysis is applied to the case of male mice.
Let $\{F_{1}^{(k)}, \ldots, F_{q}^{(k)}\}$ be the basis orthogonal to $\{I, A^{(k)}, \ldots, (A^{(k)})^s\}$. Then, we scale the basis matrices so that they are unit vectors with respect to the Frobenius norm:
\[
\tilde{A}_{j}^{(k)} = (A^{(k)})^j / ||(A^{(k)})^j||_F, \quad j=0, \ldots, s, \quad 
\tilde{F}_{\ell}^{(k)} = F_{\ell}^{(k)} / || F_{\ell}^{(k)} ||_F, \quad \ell=1, \ldots, q.
\]
We define $(A_k)^0 \equiv I$ by convention. The regression model for each module is then given by
\[
Y_i^{(k)} (Y_i^{(k)})^\top = \sum_{j=0}^s \alpha_j  \tilde{A}_{j}^{(k)} + \sum_{j=1}^q \beta_j \tilde{F}_{j}^{(k)} + \varepsilon_i \for i=1,\ldots, n,
\]
Here, $Y_i^{(k)}$ is the sub-vector restricted on the set $\mathcal{G}_k$. The similar analysis is applied to the case of male mice. We set the order $s=2$ in our analysis because we find strong multicollinearity between basis matrices of degrees higher than $3$.

{\color{black}
	Note that we can search the optimal order $s$ by model selection criteria, e.g. AIC or BIC, as we do the optimal tuning parameter $\lambda$. This approach has been used in previous work (e.g. \cite{lan2018covariance}). 
	However, it is noteworthy that increasing $s$ may introduce multicollinearity among basis matrices, as observed in the usual polynomial regression.  
}

%\PAR{Results}
\subsection{Results}%\label{ssec:implementation}
For computational simplicity, we restrict our analysis to six gene sets of size smaller than $100$, $\{\mathcal{G}_k\}_{k=1}^6$. The genes in each set are listed below.
\begin{table}[H]
	\tiny
	\begin{tabularx}{\textwidth}{|c|c|X|}
		%			\begin{tabular}{r|r|l}
		\hline
		Gene set & Size & Gene ID\\
		\hline
		1 & 37 & 00743, 05900, 06408, 08523, 08789, 09506, 12202, 12566, 12675, 13152, 24116, 25608, 26786, 30469, 35158, 36605, 38017, 46250, 46994, 49025, 52924, 53523, 54020, 54501, 55476, 58422, 59097, 60307, 63965, 63985, 67195, 67953, 68222, 69683, 71535, 74306, 78546\\
		\hline
		2 & 44 & 02021, 02758, 07836, 11900, 15218, 17456, 18479, 25048, 25601, 25842, 27378, 27861, 30620, 30812, 31586, 32920, 34467, 38270, 42929, 45252, 45911, 46218, 46735, 46768, 47216, 48535, 48659, 50031, 54705, 55391, 60559, 65821, 66276, 69238, 70677, 71411, 73157, 73365, 73735, 77517, 78706, 80165, 82041, 82551\\
		\hline
		3 & 63 & 02046, 04230, 07748, 07895, 08908, 08909, 09436, 10000, 10861, 11135, 12536, 13190, 13217, 14444, 15397, 15971, 17308, 19497, 19546, 22068, 22137, 22559, 25371, 33244, 33353, 35224, 36960, 38808, 39174, 42404, 44120, 45089, 45188, 45516, 45781, 46806, 48127, 53710, 54396, 55001, 56290, 56485, 56913, 57673, 58090, 58516, 58688, 58766, 59751, 59778, 60223, 60702, 61791, 63384, 64669, 69768, 71655, 74751, 75545, 76103, 79286, 80903, 82622\\
		\hline
		4 & 62 & 01596, 01949, 02175, 04172, 05179, 06387, 07049, 08162, 09432, 10511, 11644, 12683, 15110, 18485, 20582, 21613, 22288, 22897, 24833, 27363, 31239, 31896, 34948, 36635, 37817, 39789, 39832, 39853, 40321, 41171, 41180, 42440, 43123, 43236, 44499, 44978, 46284, 49364, 49901, 52088, 52285, 52488, 54402, 55921, 56482, 57142, 59326, 62373, 62591, 64344, 64507, 64856, 65834, 67116, 69681, 72840, 75656, 76705, 78537, 79439, 80680, 82303\\
		\hline
		5 & 82 & 00517, 00602, 01100, 03470, 04963, 05210, 05527, 06705, 07275, 07454, 09305, 10512, 11486, 12639, 14269, 15439, 16217, 21727, 23913, 30006, 30262, 30762, 30928, 31264, 33070, 33186, 33537, 38058, 40085, 40222, 40489, 41083, 42923, 43176, 43234, 46017, 46654, 47430, 48733, 50438, 51367, 51604, 51611, 52922, 53208, 53716, 54086, 54437, 54594, 56436, 57123, 57790, 58684, 59229, 60224, 60506, 60603, 62948, 65172, 65828, 66041, 67143, 67145, 67444, 67534, 68155, 69114, 70671, 73765, 73830, 74466, 75424, 76726, 77164, 78229, 78426, 78527, 78942, 80515, 80792, 81411, 82585\\
		\hline
		6 & 89 & 00231, 00525, 01394, 01646, 01947, 02151, 04888, 04929, 04959, 07970, 09297, 10988, 11143, 11559, 12141, 12375, 15549, 17950, 18332, 19091, 19519, 20073, 20368, 20566, 21626, 24043, 25869, 27298, 27700, 27858, 27978, 28983, 30412, 31280, 31742, 32606, 32698, 35101, 38501, 38919, 40465, 41568, 43889, 43985, 44003, 44175, 44489, 44497, 44766, 45019, 46069, 46375, 47183, 47198, 47239, 48203, 50551, 52236, 52548, 53547, 54025, 54545, 56241, 58285, 58531, 60201, 60736, 60741, 61427, 63950, 64112, 64537, 65211, 66426, 68097, 69653, 69884, 70955, 72657, 75851, 75907, 75932, 76328, 78106, 78212, 78286, 79610, 79636, 80864\\
		\hline
		%	\end{tabular}w
	\end{tabularx}
	\caption{List of gene IDs for each gene set. The full ID starts with MMT000$*$, i.e. MMT00000743 for the first gene ``00743'' of gene set 1.}
\end{table}

Figure \ref{fig:real_coef} presents the estimated regression coefficients. 
The proposed covariance regression model can estimate the effect sizes of $A$ and $A^2$, while \texttt{WCGNA} cannot. Moreover, we observe that the common network has different effects on the female and male groups, particularly in the gene sets 1 and 2.
\begin{figure}[H]
	\centering
	\includegraphics[width=0.6\linewidth]{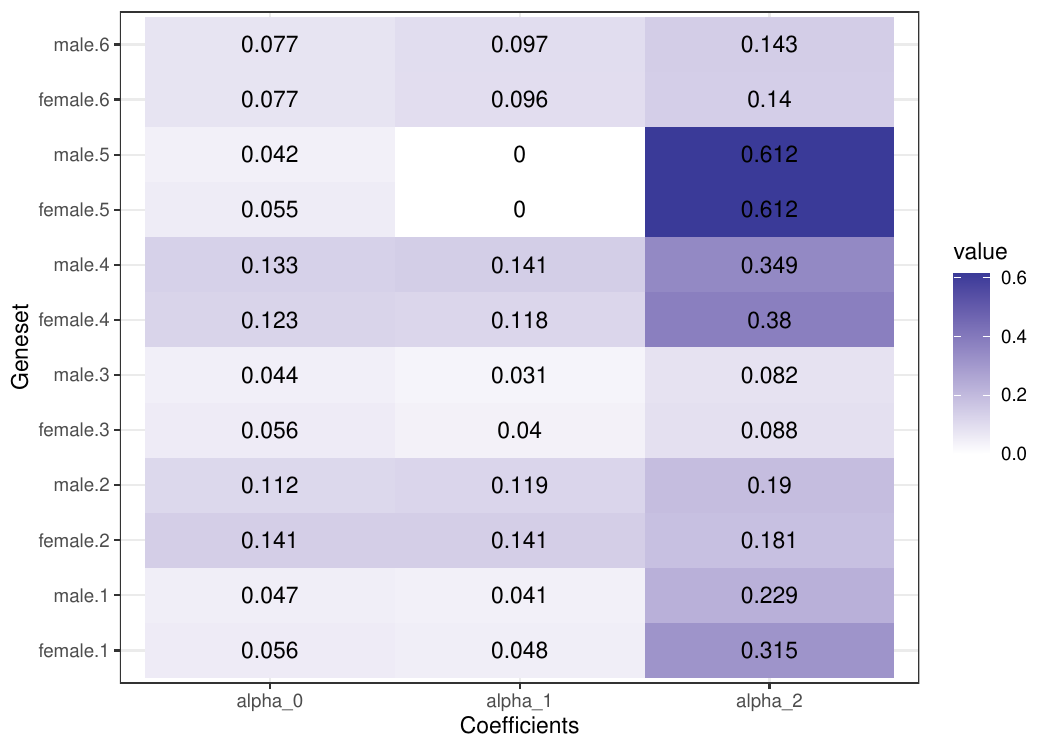}
	\caption{The estimated coefficients $(\hat{\alpha}_0, \hat{\alpha}_1, \hat{\alpha}_2)$ for each gene set of male and female groups.}
	\label{fig:real_coef}
\end{figure}

%%{\bf \cite{lan2018covariance} vs. the proposed method}
Figure \ref{fig:real_covreg} shows the heatmaps of the sample covariance matrix ($S_n$) and the covariance regression estimators ($\widehat{\Sigma}_A$,  $\widehat{\Sigma}_R$) for the first gene set of size 37. Additional results for the other sets are available in \ref{sec:add_result}. The figure confirms that the spanned basis effectively explains some portion not accounted for by the network information alone. Therefore, the proposed method demonstrates its value in recovering covariance patterns better compared to the earlier work (\cite{lan2018covariance}).

\begin{figure}[H]
	\centering
	\includegraphics[width=1\linewidth,page=3]{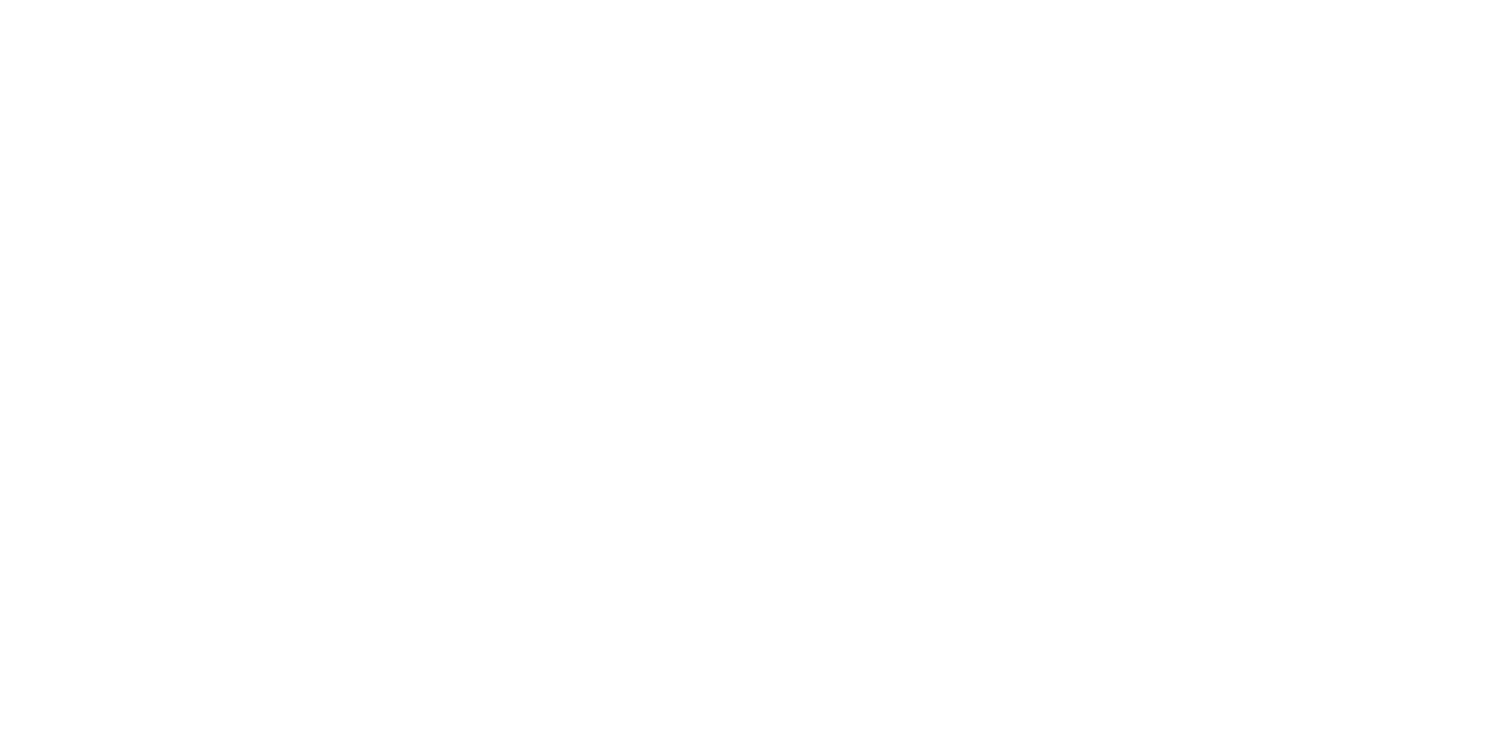}
	\caption{The empirical estimator and covariance regression estimators. The results are from the gene set 1. For each group, the rows and columns are reordered by hierarchical clustering.}
	\label{fig:real_covreg}
\end{figure}

\section{Discussion}

% % summary
In this paper, we introduce a new class of covariance regression models, called the linear covariance selection model. While existing models assume the covariance structure as a linear combination of predefined matrices, the proposed one incorporates additional basis matrices to cover covariance patterns not fully described by the given matrices. For better interpretability, we trim the fully parametrized model through the $\ell_1$-penalized regression method. 
%{\color{black}
Furthermore, we establish a statistical theory for the proposed method under the (symmetric) matrix regression framework. Though the convergence rate is sub-optimal under the Gaussian linear covariance model, we believe that this work suggests an interesting avenue for further research in symmetric matrix regression. 
%}

% % limitations and future work

It would be interesting to construct basis matrices for the remainder to incorporate the structure used in existing literature. One avenue is to consider a sparse remainder part, as in \cite{Fan:2013} wherein a covariance matrix is decomposed into low-rank plus sparse matrices. 
In our application to the liver gene expression data, we observe different effects of the common network on co-expressions between two groups. This raises a question about testing the equality of structured covariance matrices, which is equivalent to comparing regression coefficients in the context of covariance regression. Hence, new tools tailored to the inferential questions are called for. Moreover, in the co-expression analysis, the network information may not be readily available in practice but remains of primary interest. Simultaneous estimation of the network structure and its effect size could be a valuable direction for future research.

To improve estimation, one could develop estimation methods addressing the issues that $\ell_1$ penalty causes estimation bias and that covariance regression is inherently a high-dimensional problem. Techniques such as debiased lasso \citep{zhang2014confidence,van2014asymptotically}, adaptive lasso \citep{zou2006adaptive}, or non-convex penalization methods \citep{fan2001variable,fan2004nonconcave}, which are known to enjoy the oracle property in the linear regression model, can be introduced. It is expected that by incorporating these methodologies into the covariance regression model, we can derive asymptotic properties of the estimator and enhance estimation performance in high-dimensional scenarios.

\iffalse
%--------------------------------------------------------------------------------
\section*{Disclosure Statement}
%--------------------------------------------------------------------------------
No potential conflict of interest was reported by the authors.
\fi

%--------------------------------------------------------------------------------
\section*{Acknowledgements}
%--------------------------------------------------------------------------------

The research of Kwan-Young Bak was supported by the Basic Science Research Program through the National Research Foundation of Korea (NRF) funded by the Ministry of Education, Science and Technology (RS-2022-00165581).

\appendix
%--------------------------------------------------------------------------------
\section{Additional results for numerical studies}\label{sec:add_result}

We present additional results not shown in the main manuscript due to space. In the co-expression network analysis, we model the covariance matrices corresponding to the six gene sets. These results are illustrated in the following figures. We can see that the remainder matrix ($\hat{\Sigma}_R$) ameliorates the approximation of the unstructured estimators ($S_n$).
\begin{figure}[H]
	\centering
	\includegraphics[width=1\linewidth,page=3]{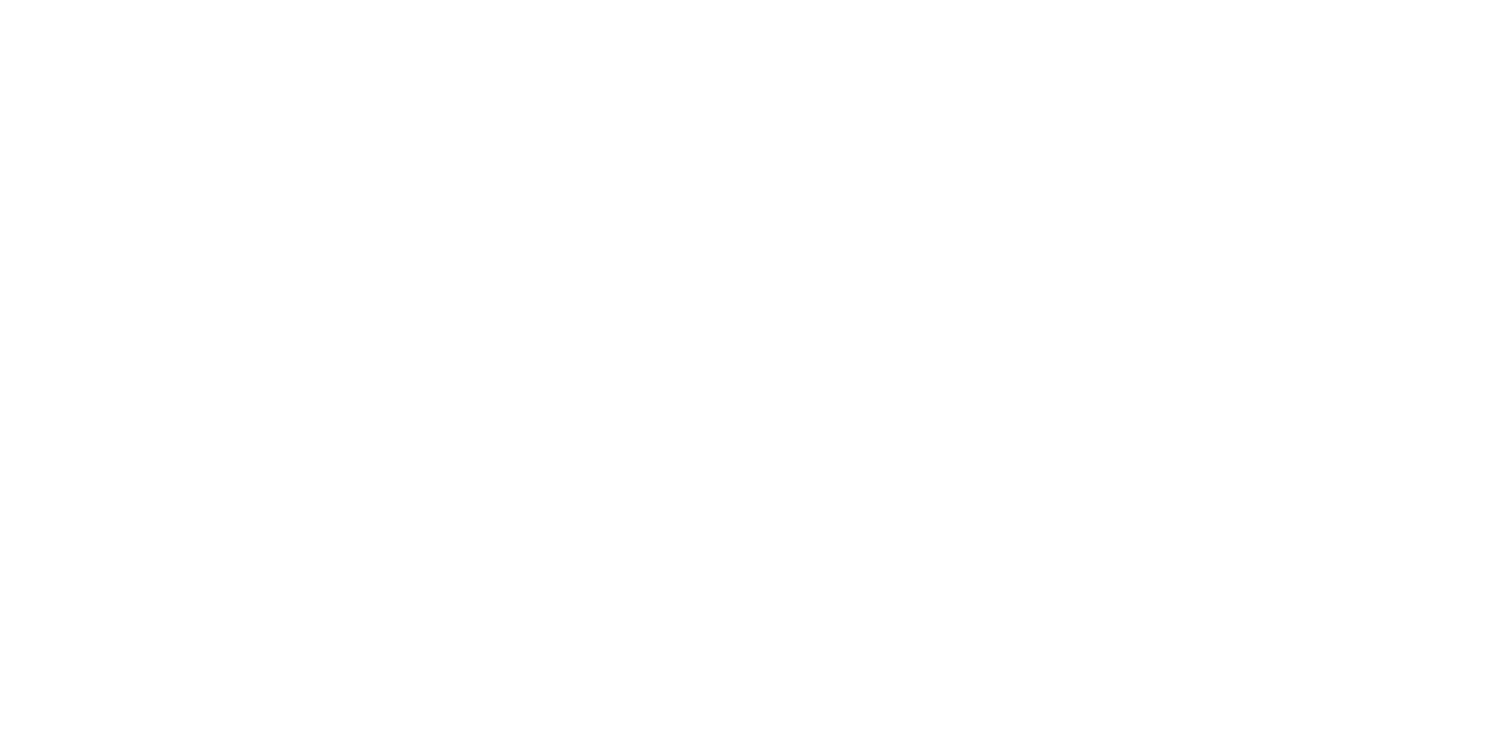}
	\includegraphics[width=1\linewidth,page=3]{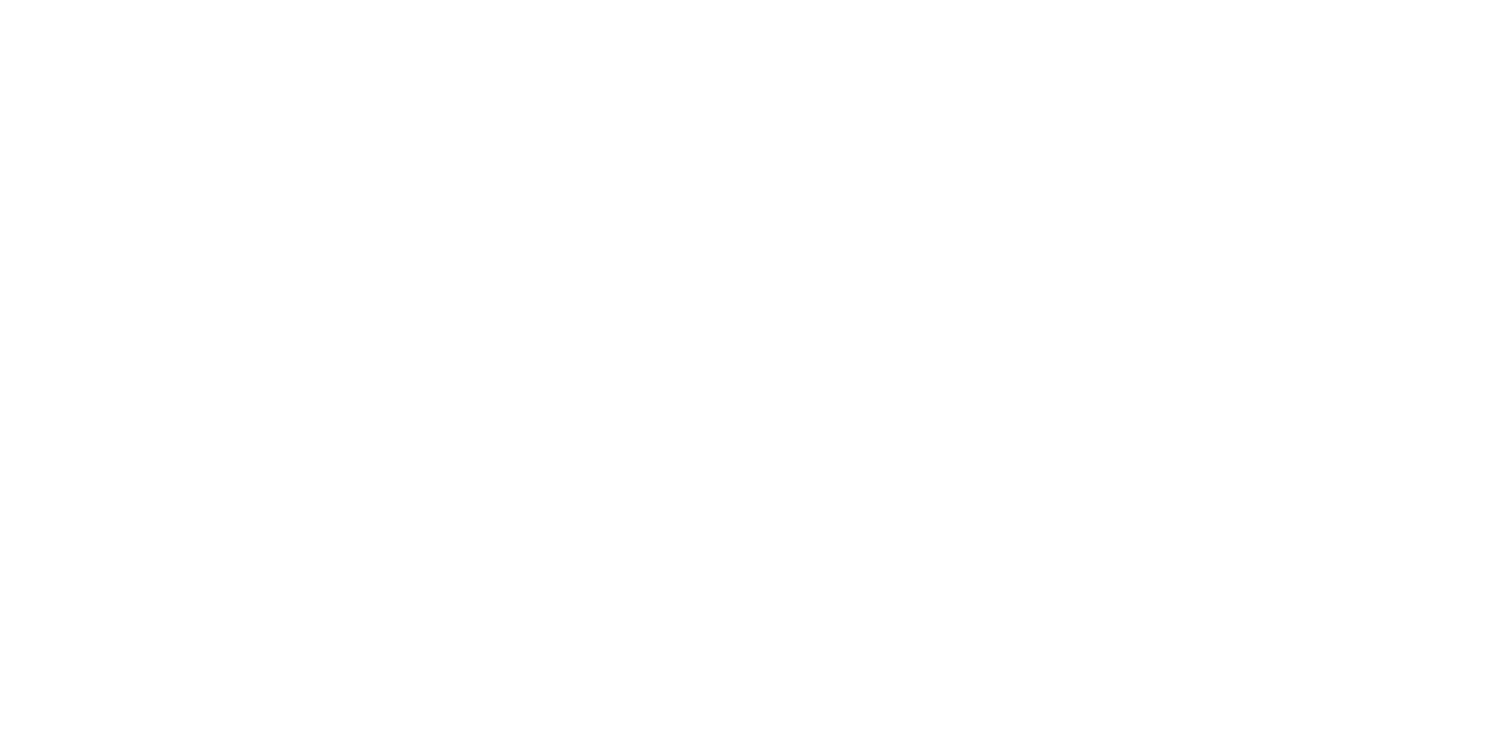}
	\caption{The empirical estimator and covariance regression estimators. The results are from the gene set 2 (top) and 3 (bottom). For each group, the rows and columns are reordered by hierarchical clustering.}
	%	\label{fig:real_covreg}
\end{figure}

\begin{figure}[H]
	\centering
	\includegraphics[width=1\linewidth,page=3]{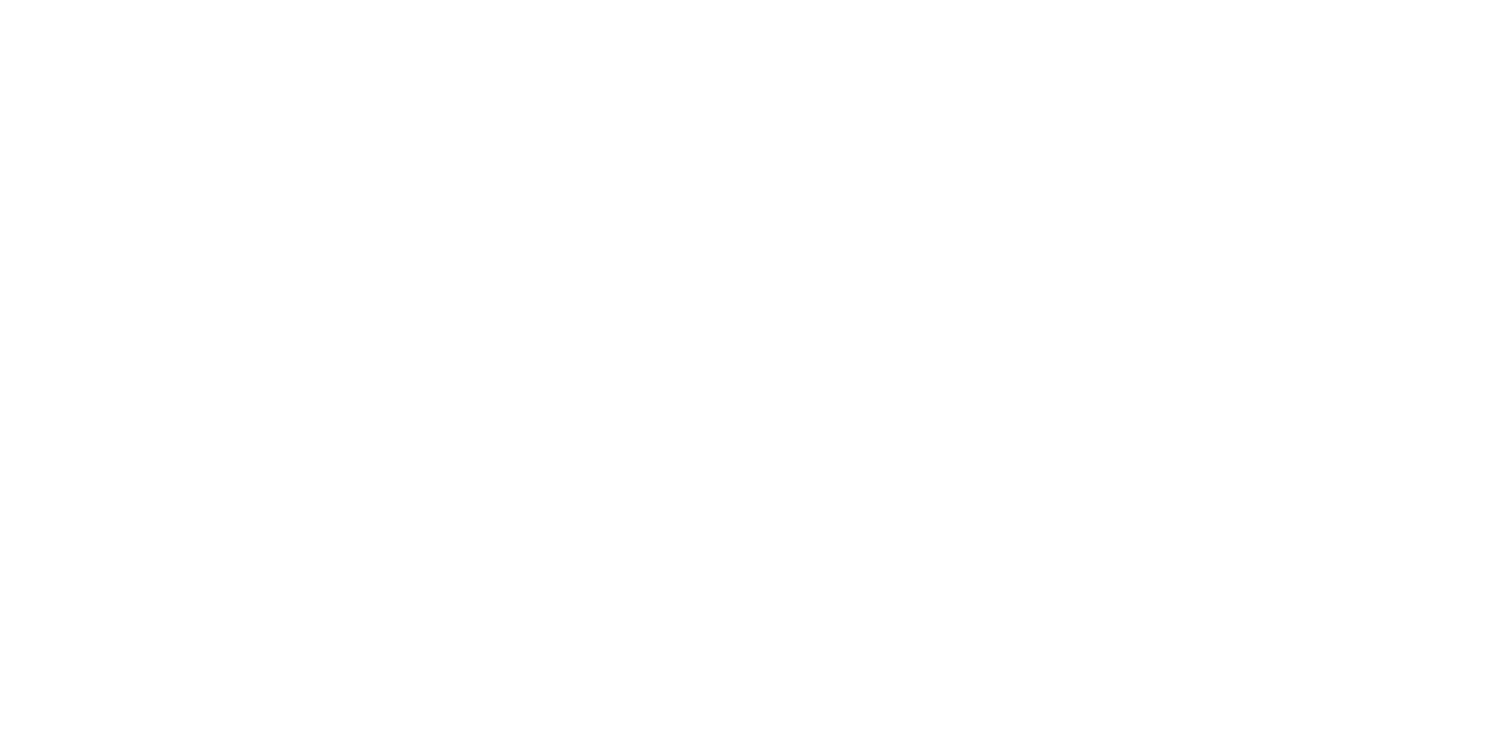}
	\includegraphics[width=1\linewidth,page=3]{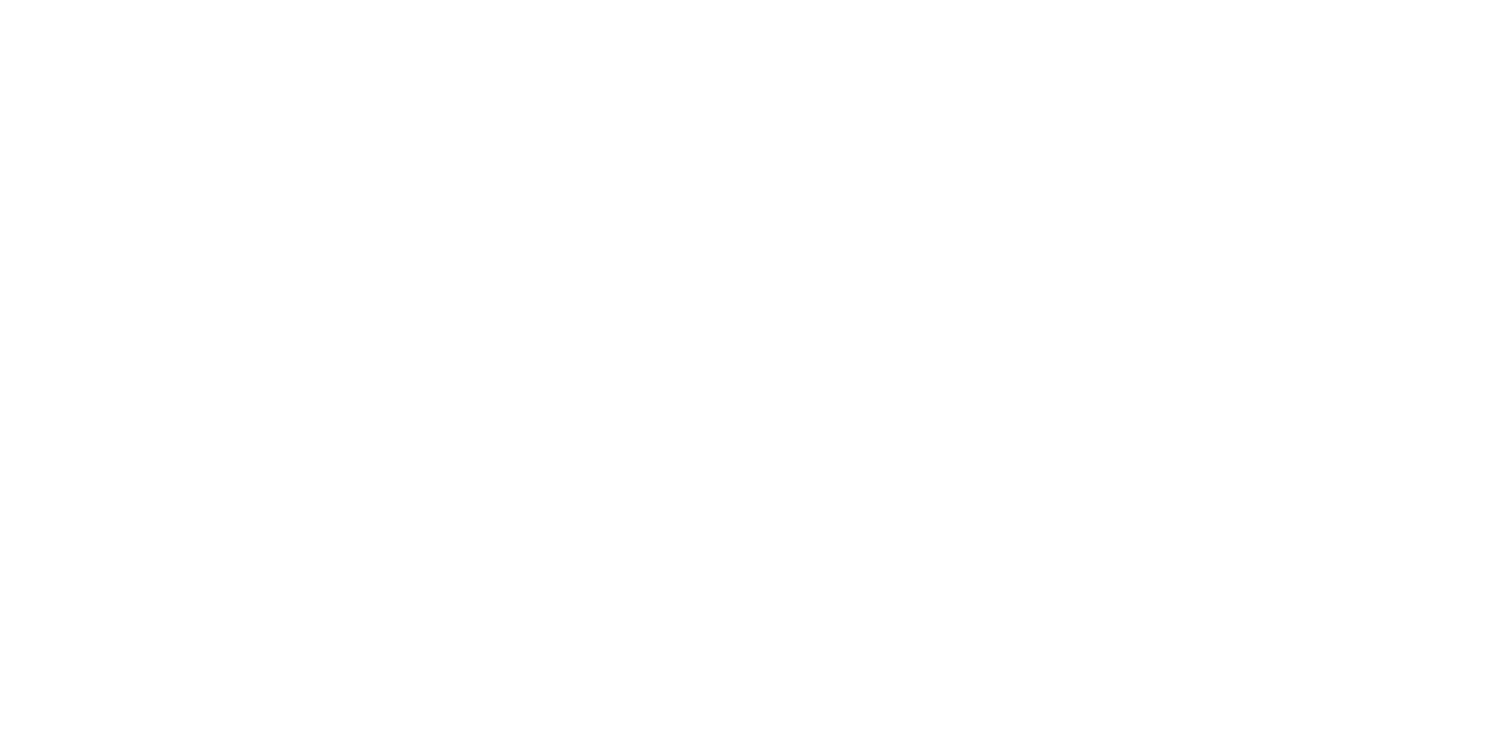}
	\caption{The empirical estimator and covariance regression estimators. The results are from the gene set 4 (top) and 5 (bottom). For each group, the rows and columns are reordered by hierarchical clustering.}
	%	\label{fig:real_covreg}
\end{figure}

\begin{figure}[H]
	\centering
	\includegraphics[width=1\linewidth,page=3]{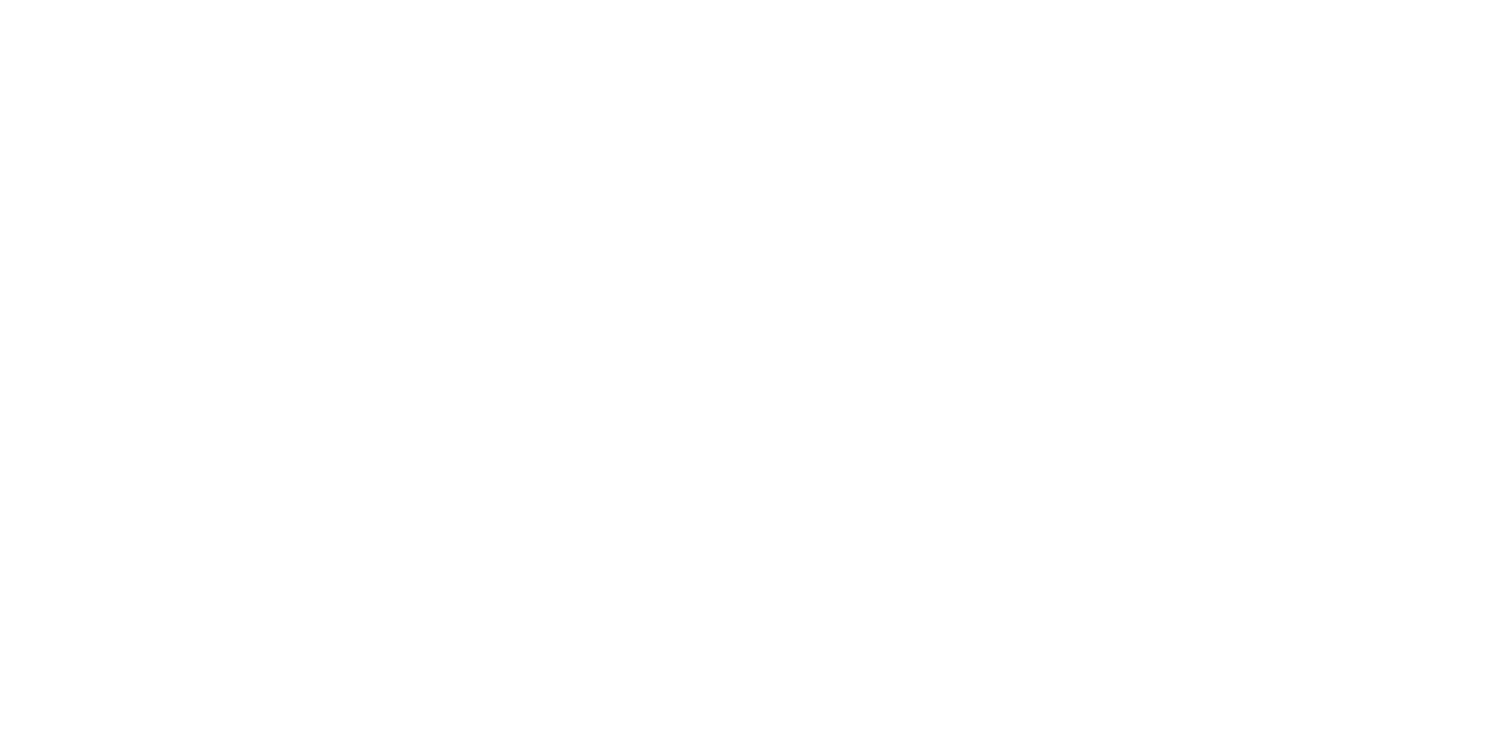}
	\caption{The empirical estimator and covariance regression estimators. The results are from the gene set 6. For each group, the rows and columns are reordered by hierarchical clustering.}
	%	\label{fig:real_covreg}
\end{figure}

\section{Proofs}\label{sec:proofs}

\subsection{Proof of Proposition \ref{prop:update-formula}}
To obtain the closed form solution of the univariate optimization problem, we make use of the following result: 
\begin{equation}\label{eq:univ-ST}
	\argmin_{z \in \R} (z^2 - 2az + 2\lambda \av{z}) = \ST(a,\lambda).
\end{equation}
This can be easily verified by separately examining the axis of symmetry of a parabola for cases in which $a>\lambda>0$, $a<-\lambda<0$ and $\av{a} < \lambda$.
Note
\begin{align*}
	\ell_j^\lambda(\theta_j) 
	&= 
	\sum_{i=1}^n \sum_{l,k=1}^d  \br{Z_{i, kl} - (\tilde{\theta}_1 B_{1,kl} + \cdots + \tilde{\theta}_{j-1} B_{j-1, kl} + \theta_j B_{j,kl} + \tilde{\theta}_{j+1} B_{j+1, kl}  + \cdots + \tilde{\theta}_p B_{p, kl})}^2 
	\\
	&\quad + 
	2\lambda \av{\theta_j} + 2\lambda \sum_{j' \ne j} \av{\tilde{\theta}_{j'}}
	\\
	&=
	\sum_{i=1}^n \sum_{k,l=1}^d \br{r_{ijkl} - \theta_j B_{j,kl}}^2 + 2\lambda \av{\theta_j} + C
	\for 
	r_{ijkl} = Z_{i,kl} - \sum_{j' \ne j} \tilde{\theta}_{j'} B_{j',kl}
	\\
	&=
	\theta_j^2 \sum_{i=1}^n \sum_{k,l=1}^d B_{j,kl}^2
	-
	2 \theta_j \sum_{i=1}^n \sum_{k,l=1}^d r_{ijkl} B_{j,kl} + 2\lambda \av{\theta_j} + C
	\\
	&=
	n \sum_{k,l=1}^d B_{j,kl}^2
	\pa{
		\theta_j^2 - 2 \frac{\sum_{i=1}^n \sum_{k,l=1}^p r_{ijkl} B_{j,kl}}{n \sum_{k,l=1}^d B_{j,kl}^2}
		+
		2 \frac{\lambda}{n \sum_{k,l=1}^d B_{j,kl}^2} \av{\theta_j} 
	}
	+ C,
\end{align*}
where $C$ is the term independent of $\theta_j$. Using \eqref{eq:univ-ST}, the update formula based on the minimizer of $\ell_j^\lambda$ is given by
\[
\tilde{\theta}_j \leftarrow \ST \pa{\frac{\sum_{i=1}^n \sum_{k,l=1}^d r_{ijkl} B_{j,kl}}{n \sum_{k,l=1}^d B_{j,kl}^2},  \frac{\lambda}{n \sum_{k,l=1}^d B_{j,kl}^2}}.
\]

\subsection{Proof of Proposition \ref{prop:lambda-max}}

The vectorization of a matrix $D \in \R^{p \times p}$ is defined as
\[
\v(D) = 
\BM
\c_1(D) \\
\vdots \\
\c_p(D) 
\EM
\in \R^{p^2},
\]
where $\c_j(D)$ denotes the $j$-th column of $D$. 
Let 
\[
\Z = 
\BM
\v(Z_1) \\
\vdots \\
\v(Z_n) \\
\EM
\in \R^r
,
\quad 
\Omega =
\BM
\v(\Sigma) \\
\vdots \\
\v(\Sigma)
\EM
\in 
\R^r
,
\quad
\X = 
\BM
\v(B_1) & \cdots & \v(B_p) \\
\vdots & \ddots & \vdots \\
\v(B_1) & \cdots & \v(B_p)
\EM
\in \R^{r \times p},
\]
where $r = nd^2$. Observe that $\Omega = \X \theta^*$.
With this notation, we have
\[
\ell^\lambda(\theta) = \av{\Z - \X \theta}^2 + 2\lambda \av{\theta}_1,
\]
where $\av{\cdot}$ is the $\ell_2$ norm of a Euclidean vector. Thus, the subgradient optimality condition \citep{rockafellar1997convex} is given by
\[
0 = - \X^\top (\Z - \X \theta) + \lambda \sign(\theta),
\]
where $\sign(\theta) = (\sign(\theta_j)) \in \R^p$ is defined as
\[
\sign(\theta_j) \in
\begin{cases}
\set{+1} & \mbox{if $\theta_j > 0$} \\
\set{-1} & \mbox{if $\theta_j < 0$} \\
[-1,+1] & \mbox{if $\theta_j = 0$.} \\
\end{cases}
\]
Observe
\[
\lambda_m 
= 
\max_{1 \le j \le p} \Av{\sum_{i=1}^n \ip{B_j}{Z_i}}
=
\max_{1 \le j \le p} \Av{\c_j(\X)^\top \Z}.
\]
Choose $\lambda > \lambda_m$ and suppose that there exists a non-zero vector $\hat{\theta}^\lambda$ that satisfies the optimality condition
\[
\X^\top \X \hat{\theta}^\lambda = \X^\top \Z - \lambda \sign(\hat{\theta}^\lambda).
\]
The column independence of $\X$ implies that 
\[
(\hat{\theta}^\lambda)^\top (\X^\top \Z - \lambda \sign(\hat{\theta}^\lambda) ) = (\hat{\theta}^\lambda)^\top\X^\top \X \hat{\theta}^\lambda > 0.
\]
On the other hand, we find 
\[
(\hat{\theta}^\lambda)^\top (\X^\top \Z - \lambda \sign(\hat{\theta}^\lambda) )
=
\sum_{j=1}^p \hat{\theta}^\lambda_j \pa{\c_j(\X)^\top \Z - \lambda \sign (\hat{\theta}^\lambda_j)}.
\]
By the definition of $\lambda_m$, $\hat{\theta}^\lambda_j > 0$ implies $\c_j(\X)^\top \Z - \lambda \sign (\hat{\theta}^\lambda_j) \le 0$ and $\hat{\theta}^\lambda_j < 0$ implies $\c_j(\X)^\top \Z - \lambda \sign (\hat{\theta}^\lambda_j) \ge 0$ so that
\[
(\hat{\theta}^\lambda)^\top (\X^\top \Z - \lambda \sign(\hat{\theta}^\lambda) ) \le 0,
\]
which leads to a contradiction. Therefore, we have $\hat{\theta}^\lambda = 0$ for $\lambda > \lambda_m$.

\subsection{Proofs of theoretical results}

%\color{black}
\subsubsection*{Verification of Bernstein moment condition}\label{ssec:verify_bernstein}

We state the result from \cite{zhu2012short} for convenience of readers, which verifies the Bernstein condition from the multivariate Gaussian distribution.
\begin{proposition}[Lemma 4 of \cite{zhu2012short}]
	For $Y \sim N_d(0, \Sigma)$, we have
	\[
	\text{\rm var}(YY^\top - \Sigma) = \tr(\Sigma) \Sigma + \Sigma^2 \text{ and } \mathbb{E}(YY^\top - \Sigma)^k \preceq \dfrac{1}{2}k! \tr(\Sigma)^{k-2} \{\tr(\Sigma) \Sigma + \Sigma^2\}.
	\]
\end{proposition}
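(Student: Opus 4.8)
The plan is to dispatch the variance identity by a direct fourth-moment computation, and to obtain the moment bound by peeling off symmetry until the problem becomes a scalar coefficient estimate governed by a clean generating-function identity.

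\textbf{Variance identity.} Since $Y$ is a vector, $Q:=YY^\top$ has rank one and satisfies $QQ=(Y^\top Y)Q$, so that $\E\big[(YY^\top)^2\big]=\E\big[(Y^\top Y)\,YY^\top\big]$. Writing the $(a,b)$ entry as $\sum_c \E[Y_c^2 Y_a Y_b]$ and applying Isserlis' theorem, $\E[Y_aY_bY_cY_d]=\Sigma_{ab}\Sigma_{cd}+\Sigma_{ac}\Sigma_{bd}+\Sigma_{ad}\Sigma_{bc}$, one finds $\sum_c(\Sigma_{cc}\Sigma_{ab}+2\Sigma_{ca}\Sigma_{cb})=\tr(\Sigma)\Sigma_{ab}+2(\Sigma^2)_{ab}$, hence $\E[(YY^\top)^2]=\tr(\Sigma)\Sigma+2\Sigma^2$. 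Expanding $\Var(YY^\top-\Sigma)=\E[(YY^\top)^2]-\Sigma^2$ then gives exactly $\tr(\Sigma)\Sigma+\Sigma^2$, which also anchors the constant of the moment bound at $k=2$.

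\textbf{Reductions for the moment bound.} Diagonalize $\Sigma=U\Lambda U^\top$ with $\Lambda$ diagonal; since $Y\stackrel{d}{=}U\Lambda^{1/2}g$ for $g\sim N(0,I)$, the matrix $D:=YY^\top-\Sigma$ equals $U(\Lambda^{1/2}gg^\top\Lambda^{1/2}-\Lambda)U^\top$, so $\E[D^k]=U\,\E[\tilde D^k]\,U^\top$ with $\tilde D$ the diagonal-covariance version. Both sides of the claimed inequality transform by the same orthogonal conjugation and $\tr(\Sigma)=\tr(\Lambda)$ is invariant, so it suffices to treat $\Sigma=\Lambda$ diagonal. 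Next, the sign flip $g_i\mapsto -g_i$ shows $\E[D^k]=P\,\E[D^k]\,P$ for every diagonal sign matrix $P$, which forces $\E[D^k]$ to be diagonal; as $V=\tr(\Lambda)\Lambda+\Lambda^2$ is also diagonal, the PSD inequality collapses to the scalar upper bounds $\E[(D^k)_{ii}]\le \tfrac12 k!\,t^{k-2}\lambda_i(t+\lambda_i)$, where $t=\tr(\Lambda)$.

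\textbf{Generating function and the crux.} I would encode the diagonal entries through the resolvent identity $\sum_{k\ge 0}z^k\,\E[(D^k)_{ii}]=\E\big[\big((I-zD)^{-1}\big)_{ii}\big]$, valid for $|z|$ small. Since $D=YY^\top-\Lambda$ is a rank-one perturbation of the diagonal matrix $-\Lambda$, writing $I-zD=(I+z\Lambda)-zYY^\top$ and applying the Sherman--Morrison formula yields the closed form
\[
\big((I-zD)^{-1}\big)_{ii}=\frac{1}{1+z\lambda_i}+\frac{z\lambda_i g_i^2/(1+z\lambda_i)^2}{1-z\sum_{j}\lambda_j g_j^2/(1+z\lambda_j)}.
\]
The first summand contributes $(-\lambda_i)^k$, whose modulus is at most $\lambda_i t^{k-1}\le \tfrac12 k!\,t^{k-2}\lambda_i(t+\lambda_i)$ for $k\ge 2$. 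The main work is to extract and bound the coefficient of $z^k$ in the second summand by expanding the geometric series $1/(1-zS)=\sum_m (zS)^m$ with $S=\sum_j\lambda_j g_j^2/(1+z\lambda_j)$ and taking Gaussian expectations of the resulting products of $g_j$'s via $\E[g^{2m}]=(2m-1)!!$. The factorial growth emerges from the $\E[S^m]$-type factors, and the hard part, which is precisely the content of Lemma~4 of \cite{zhu2012short}, is the bookkeeping that pins the accumulated constant at the sharp $\tfrac12 k!$ with geometric ratio exactly $t^{k-2}$, rather than a crude $C^k k!$. This is made possible by the constraints $\lambda_j\le t$ and $\sum_j\lambda_j=t$, and crucially by the fact that only an upper bound on the diagonal entries is required, so negative contributions in the expansion may be discarded; essentially all the difficulty of the statement lives in this coefficient estimate.
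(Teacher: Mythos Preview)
The paper does not prove this proposition at all: it is stated purely as a quotation of Lemma~4 in \cite{zhu2012short}, prefaced by ``We state the result from \cite{zhu2012short} for convenience of readers,'' and no argument is given. So there is no in-paper proof to compare your attempt against.

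On the merits of your attempt itself: the variance identity is handled cleanly and correctly via Isserlis' theorem, and your reductions for the moment bound (orthogonal diagonalization of $\Sigma$, the sign-flip argument forcing $\E[D^k]$ to be diagonal, and the Sherman--Morrison resolvent formula) are all sound. The gap is that you stop short of the actual coefficient estimate for the second summand of the resolvent, and you say so yourself: ``essentially all the difficulty of the statement lives in this coefficient estimate.'' That is honest, but it means your proposal is a correct \emph{reduction} rather than a proof; the combinatorics that pin the constant at exactly $\tfrac12 k!\,t^{k-2}$ are precisely what the cited lemma supplies, and you have not reproduced them. If you want a self-contained argument, you would need to carry out that bookkeeping --- for instance by bounding $\E\big[g_i^2\prod_m S^{a_m}\big]$ term-by-term using $\lambda_j\le t$, $\sum_j\lambda_j=t$, and the double-factorial moments of the $g_j$'s --- rather than deferring it back to \cite{zhu2012short}.
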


The following shows a sub-Gaussian vector satisfies another moment condition, similar to the Bernstein condition, which suffices to obtain the concentration inequality.
\begin{lemma}\label{le:subexp_YYt}
	% % assumptions
	Suppose $Y$ is a sub-Gaussian vector, i.e. its projection onto any unit vector is sub-Gaussian in $\R$. Moreover, we assume $\mathbb{E}[Y]=0$ and $\mathbb{E}[YY^\top] = 
	\Sigma$
	% % claim
	Then, a random matrix $Q \equiv YY^\top - \Sigma$ satisfies for some $B>0$,
	\begin{equation}\label{eq:moment_ineq}
		\mathbb{E}[Q^k] \preceq \dfrac{1}{2} k! B^k I, \quad k\ge 2.
	\end{equation}
	Remark that the constant $B$ may depend on the dimension $p$ and $\Sigma$.
	
\end{lemma}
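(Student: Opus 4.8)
The plan is to reduce the matrix moment bound \eqref{eq:moment_ineq} to a scalar sub-exponential moment bound via a fixed-direction argument. First I would note that a positive semi-definite ordering statement like $\E[Q^k] \preceq \frac12 k! B^k I$ is (for even $k$, where $Q^k \succeq 0$) equivalent to the scalar bound $\sup_{\norm{u}=1} u^\top \E[Q^k] u = \norm{\E[Q^k]}_2 \le \frac12 k! B^k$; for odd $k$ one can absorb $Q^k$ into $|Q|^k$ or simply pass through $\E[Q^{2m}]$ and use $\norm{\E[Q^k]}_2 \le \norm{\E[Q^{2k}]}_2^{1/2}$ together with Jensen. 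So it suffices to control $\E\br{(u^\top Q^k u)}$ uniformly in unit vectors $u$, and for that I would bound $\norm{\E[Q^k]}_2 \le \E\norm{Q^k}_2 = \E\norm{Q}_2^k$ using $\norm{\cdot}_2 = \norm{\cdot}_{\mathrm{op}}$ and convexity/triangle-type inequalities for the operator norm.

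The second step is to establish a sub-exponential tail (equivalently, a geometric growth of moments) for the scalar random variable $\norm{Q}_2 = \norm{YY^\top - \Sigma}_2$. Since $Y$ is sub-Gaussian as a vector, $\norm{Y}_2^2$ is sub-exponential: this is the standard fact that a squared sub-Gaussian is sub-exponential, and more carefully one has $\norm{\,\norm{Y}_2^2 - \E\norm{Y}_2^2\,}_{\psi_1} \lesssim \norm{\norm{Y}_2}_{\psi_2}^2$. Then $\norm{YY^\top - \Sigma}_2 \le \norm{Y}_2^2 + \norm{\Sigma}_2$, so $\norm{Q}_2$ has a sub-exponential upper envelope, hence $\E\norm{Q}_2^k \le \frac12 k! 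B^k$ for a suitable $B$ depending on the sub-Gaussian norm of $Y$, on $\norm{\Sigma}_2$, and (through $\E\norm{Y}_2^2 = \tr\Sigma$) on the dimension $p$ — which is exactly the dimension-dependence flagged in the statement. Combining with the first step gives $\norm{\E[Q^k]}_2 \le \E\norm{Q}_2^k \le \frac12 k! B^k$, which is \eqref{eq:moment_ineq}.

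Alternatively — and this may be cleaner to write — I would diagonalize in a fixed direction first: for any unit $u$, write $u^\top Q u = (u^\top Y)^2 - u^\top \Sigma u$, which is a centered scalar random variable that is sub-exponential because $u^\top Y$ is sub-Gaussian in $\R$ with norm bounded uniformly in $u$. This gives $\E\br{(u^\top Q u)^k} \le \frac12 k! B^k$ for all unit $u$. The remaining gap is that $u^\top \E[Q^k] u$ is not the same as $\E\br{(u^\top Q u)^k}$ when $k \ge 2$; to close it I would use $\norm{\E[Q^k]}_2 \le \E\norm{Q}_2^k$ as above, or a symmetrization/decoupling to pass from bilinear forms to the $k$-th matrix power, at the cost of absolute constants.

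The main obstacle is precisely this last point: turning a per-direction scalar moment bound into the uniform operator-norm bound on $\E[Q^k]$, i.e. handling the non-commutativity and the fact that $(u^\top Q u)^k \ne u^\top Q^k u$. The safe route is to bound everything by $\E\norm{Q}_2^k$ and then invoke the squared-sub-Gaussian-is-sub-exponential estimate for $\norm{Y}_2^2$; the factorial growth $\frac12 k! B^k$ is then automatic from the sub-exponential moment characterization (e.g. $\E|W|^k \le \frac12 k! (cB)^{k}$ for $W$ with $\norm{W}_{\psi_1} \le B$). I would also remark, as the statement already does, that $B$ is not dimension-free; tracking it gives $B \asymp \norm{\Sigma}_2 + \tr(\Sigma) \cdot (\text{sub-Gaussian constant})$, consistent with the Wishart case in the preceding Proposition where $\tr(\Sigma)$ appears.
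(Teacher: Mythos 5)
Your proposal is correct and follows essentially the same route as the paper's proof: reduce \eqref{eq:moment_ineq} to an operator-norm bound on $\mathbb{E}[Q^k]$, pass the norm inside the expectation, bound $\norm{Q}_2 \le \norm{Y}_2^2 + \norm{\Sigma}_2$, and use the sub-exponential moment growth of $\norm{Y}_2^2$ (with the dimension-dependent constant you both acknowledge). The only cosmetic difference is that you invoke $\norm{\mathbb{E}[Q^k]}_2 \le \mathbb{E}\norm{Q}_2^k$ directly via Jensen, whereas the paper routes through $\sqrt{\mathbb{E}\norm{Q}_2^{2k}}$ and the auxiliary bound $(2k)! \le (k!\,2^k)^2$.
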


\begin{proof}
	To prove \eqref{eq:moment_ineq}, it is enough to show $||\mathbb{E}[Q^k] ||_2 \le 0.5  k! B^k$. 
	For a symmetric matrix $A$, we have
	\[
	||\mathbb{E}[A] ||_2 = \sqrt{\sup_{t: ||t||_2=1} t^\top \mathbb{E}[A] \mathbb{E}[A] t} \le \sqrt{\sup_{t: ||t||_2=1} t^\top \mathbb{E}[A^2] t} \le
	\sqrt{\mathbb{E} \left[\sup_{t: ||t||_2=1} t^\top [A^2] t\right] } = \sqrt{ \mathbb{E} ||A||_2^2}
	\]
	Also, we note that by the triangular inequality 
	\[
	\norm{Q^k}_2 \le \norm{Q}_2^k \le (\norm{YY^\top}_2 + \norm{\Sigma}_{2})^k =(Y^\top Y + \norm{\Sigma}_{2})^k.
	\]
	where the last inequality is from $\norm{YY^\top}_2 = Y^\top Y$ since $YY^\top$ is a rank-one matrix. By using $(a + b)^k \le 2^{k-1}(a^k + b^k)$ for $a,b>0$ and $k \ge 1$, we get
	\[
	(Y^\top Y + \norm{\Sigma}_{2})^{2k} \le  2^{2k-1} ((Y^\top Y)^{2k} + \norm{\Sigma}_{2}^{2k}).
	\]
	Combining these, we get 
	\[
	\norm{\mathbb{E}[Q^k]}_2 \le \sqrt{2^{2k-1}(\mathbb{E}(Y^\top Y)^{2k} + \norm{\Sigma}_{2}^{2k})}
	\]
	Because $Y^\top Y=\sum_{j=1}^p Y_j^2$ is the sum of square of sub-Gaussian variables, it is sub-exponential, implying its moments satisfy $\mathbb{E}(Y^\top Y)^k \le (B')^k k!$ for $k\ge 2$ for some $B'>0$. Then, we get
	\[
	\norm{\mathbb{E}[Q^k]}_2 \le \sqrt{2^{2k-1}((B')^{2k} (2k)! + \norm{\Sigma}_{2}^{2k})} \le 
	\sqrt{2^{2k-1}((B')^{2k} (k! 2^k)^2 + \norm{\Sigma}_{2}^{2k})}
	\]
	by using $(2k)! \le (k! 2^k)^2$. As a consequence, a simple calculation with $B = 4(B' \vee \norm{\Sigma}_{2})$ leads to 
	\[
	\norm{\mathbb{E}[Q^k]}_2 \le 0.5 k! B^k.
	\]
	%	
	%	
	%	where $Z = \Sigma^{-1/2}Y$ is an isotropic sub-Gaussian random vector. 
	%	
	%	, which implies that $||ZZ^\top - I||_2 = \max\{|Z^\top Z-1|, 1\}$. Combining these, we have
	%	\[
	%	\norm{\mathbb{E}[Q^k]}_2 \le \sqrt{ \mathbb{E} \norm{Q^k}_2^2} \le \norm{\Sigma}_2^{k} \sqrt{\mathbb{E} \big[\max\{|Z^\top Z-1|^{2k}, 1\} \big]}.
	%	\]
	%	By using $(a + b)^k \le 2^{k-1}(a^k + b^k)$ for $a,b>0$ and $k \ge 1$, we get
	%	\[
	%	\max\{|Z^\top Z-1|^{2k}, 1\} \le
	%	 2^{2k-1}(|Z^\top Z|^{2k} + 1), \quad k\ge 1.
	%%	\max\{|Z^\top Z-p+ (p-1)|^{2k}, 1\} = 
	%%	\le 
	%%	 2^{2k-1}(|Z^\top Z-1|^{2k} + 1)
	%%	 \le 2^{2k-1}(2^{2k-1} (|Z^\top Z-p|^{2k} + (p-1)^{2k-1}) + 1)
	%	\]
	%	
	%	
	%	
	%	Combining these, we get $\norm{\mathbb{E}[Q^k]}_2 \le \sqrt{2^{2k-1}(\mathbb{E}(Z^\top Z)^{2k} + 1)}$. Since $Y^\top Y=\sum_{j=1}^p Y_j^2$ is a sub-exponential random variable, its moments satisfy $\mathbb{E}(Z^\top Z)^k \le (B')^k k!$ for $k\ge 2$ for some $B'>0$. Hence, 
	%	\[
	%	\norm{\mathbb{E}[Q^k]}_2 \le 
	%	 \sqrt{2^{2k} B^{2k} (2k)!} \le 2^{k} B^{k} 2^k k!,
	%	\]
	%	where $B = B' \vee 1$.
	%%	\dfrac{1}{2}
\end{proof}

\color{black}

\subsubsection*{Existence and uniqueness}
\begin{lemma}\label{le:exist-unique}
	Suppose that the basis matrices $\set{B_j}_{j=1}^p$ are linearly independent. Then, the minimizer of $\ell^{\lambda}(\cdot)$ uniquely exists.
\end{lemma}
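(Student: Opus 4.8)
The plan is to exploit two structural features of $\ell^\lambda$: it is convex (being a sum of squares plus a norm), and it is coercive once the basis matrices are linearly independent. Existence will follow from coercivity via the Weierstrass extreme value theorem on sublevel sets, and uniqueness will follow from strict convexity of the quadratic part on the relevant directions, which is exactly where linear independence enters.

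First I would write $\ell^\lambda(\theta) = \|\Z - \X\theta\|_2^2 + 2\lambda|\theta|_1$ using the vectorized reformulation already introduced in the proof of Proposition~\ref{prop:lambda-max}, where $\X \in \R^{r\times p}$ stacks the vectorized basis matrices. The key observation is that linear independence of $\set{B_j}_{j=1}^p$ means $\X$ has full column rank $p$ (each block row of $\X$ already has independent columns, since $\v(B_1),\ldots,\v(B_p)$ are linearly independent iff $B_1,\ldots,B_p$ are). Hence $\X^\top\X \succ 0$, so there is a constant $c_0 > 0$ with $\|\X\theta\|_2^2 \ge c_0 |\theta|_2^2$ for all $\theta$. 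This gives $\ell^\lambda(\theta) \ge c_0|\theta|_2^2 - 2|\Z|_2\,|\X\theta|_2 + 2\lambda|\theta|_1 \to \infty$ as $|\theta|_2 \to \infty$; that is, $\ell^\lambda$ is coercive.

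For existence, I would then pick any $\theta_0$ (say $\theta_0 = 0$), note the sublevel set $\set{\theta : \ell^\lambda(\theta) \le \ell^\lambda(\theta_0)}$ is closed (by continuity of $\ell^\lambda$) and bounded (by coercivity), hence compact; a continuous function attains its minimum on a nonempty compact set, and the minimizer over this sublevel set is a global minimizer of $\ell^\lambda$. For uniqueness, I would argue by strict convexity: the map $\theta \mapsto \|\Z - \X\theta\|_2^2$ is strictly convex because $\X^\top\X \succ 0$ (its Hessian $2\X^\top\X$ is positive definite), and $\theta \mapsto 2\lambda|\theta|_1$ is convex, so their sum $\ell^\lambda$ is strictly convex; a strictly convex function has at most one minimizer. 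Combining, the minimizer exists and is unique.

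The only mild subtlety — and the step I would be most careful about — is the passage from linear independence of the matrices $\set{B_j}$ to the full-column-rank property of the stacked matrix $\X$, since $\X$ consists of $n$ identical block rows $[\v(B_1)\;\cdots\;\v(B_p)]$; one must note that $\X\theta = 0$ forces $\sum_j \theta_j \v(B_j) = 0$, i.e. $\sum_j \theta_j B_j = 0$, and linear independence then forces $\theta = 0$. Everything else is standard convex analysis, so I would keep those parts brief and cite coercivity plus strict convexity as the two load-bearing facts.
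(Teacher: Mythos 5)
Your proof is correct, and the uniqueness half is essentially the paper's own argument: both reduce to strict convexity of $\theta \mapsto \av{\Z - \X\theta}^2 + 2\lambda\av{\theta}_1$, which you justify more explicitly by noting that linear independence of $\set{B_j}_{j=1}^p$ forces $\X$ to have full column rank, hence $\X^\top\X \succ 0$. The existence half is where you genuinely diverge. The paper rewrites the penalized problem as an $\ell_1$-constrained one, invokes a one-to-one correspondence between the radius $C$ and the parameter $\lambda$ via Lagrangian duality, and then applies Weierstrass on the $\ell_1$-ball; you instead establish coercivity of $\ell^\lambda$ directly from $\norm{\X\theta}^2 \ge c_0 \av{\theta}^2$ and minimize over a compact sublevel set. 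Your route is arguably cleaner: the penalized-versus-constrained equivalence is delicate to state precisely (it is usually justified by first knowing a minimizer exists, which is the very point at issue), whereas coercivity plus closed sublevel sets is self-contained. Note also that your coercivity bound does not even need the $\ell_1$ term, so the argument is insensitive to the paper's convention of leaving the intercept unpenalized. The one step you flagged as a subtlety --- that $\X\theta = 0$ forces $\sum_j \theta_j B_j = 0$ and hence $\theta = 0$ --- is exactly the right thing to check, and the paper leaves it implicit; spelling it out is an improvement, since it is the only place the hypothesis of the lemma is actually used.
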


\begin{proof}

	(Existence) The minimization problem for $\ell^\lambda(\cdot)$ can be re-expressed as 
	\[
	\text{minimize }\ \ell(\theta) \quad  \text{subj. to. }\ \av{\theta}_1 \le C,
	\]
	where a positive constant $C$ is in one-to-one relation to the complexity parameter $\lambda$ via the Lagrangian duality. Since the empirical risk function $\ell(\cdot)$ is continuous on a compact set $\set{\theta \in \R^p \colon \av{\theta}_1 \le C}$, the existence of an optimal solution $\hat{\theta}$ is guaranteed.

	(Uniqueness) 
	Let $\hat{\theta}^{(1)}$ and $\hat{\theta}^{(2)}$ be distinct minimizers of $\ell^\lambda(\cdot)$. Recall that the minimum of a convex function is unique and the minimizers form a convex set. Hence, the convex combination $\alpha \hat{\theta}^{(1)} + (1-\alpha)\hat{\theta}^{(2)}$ ($\alpha \in (0,1)$) achieves the same minimum value denoted by $m^*=\ell^\lambda(\hat{\theta}^{(1)})=\ell^\lambda(\hat{\theta}^{(2)})$. Using the strict convexity of $\theta \mapsto |\Z - \X \theta|^2 + \lambda |\theta|_1$, we have
	\[
	\begin{array}{l}
	m^*=|\Z - \X (\alpha \hat{\theta}^{(1)} + (1-\alpha)\hat{\theta}^{(2)})|^2 + 2\lambda |\alpha \hat{\theta}^{(1)} + (1-\alpha)\hat{\theta}^{(2)}|_1 \\
	\qquad\qquad < 
	\alpha(|\Z - \X \hat{\theta}^{(1)}|^2 + 2\lambda  |\hat{\theta}^{(1)}|_1) + 
	(1-\alpha)(|\Z - \X \hat{\theta}^{(2)})|^2 + 2\lambda |\hat{\theta}^{(2)}|_1) = m^*,
	\end{array}
	\]
	which leads to a contradiction. As a result, the minimizer is unique.

	%
	%Let $\hat{\theta}^{(1)}$ and $\hat{\theta}^{(2)}$ be minimizers of $\ell^\lambda(\cdot)$, and consider the convex combination $\hat{\theta}^{(2)} + \kappa \eta$ for $\eta = \hat{\theta}^{(1)} - \hat{\theta}^{(2)}$ with $\kappa \in (0,1)$. Letting $q(\kappa) = \ell^\lambda(\hat{\theta}^{(2)} + \kappa \eta)$, we have
	%\[
	%\frac{\kappa}{\partial \kappa} q(\kappa) 
	%=
	%\frac{\kappa}{\partial \kappa} \ell(\hat{\theta}^{(2)} + \kappa \eta) + \lambda \sum_{j=1}^p \sign(\hat{\theta}_j^{(2)} + \kappa \eta).
	%\]
	%Recall that the minimum of a convex function is unique and the minimizers form a convex set so that $q(\kappa)$ equals to a constant on a subinterval of $(0,1)$. In addition, the continuity of  $\kappa \mapsto \hat{\theta}^{(2)} + \kappa \eta$ implies that $\sign(\hat{\theta}_j^{(2)} + \kappa \eta)$ is a constant for all $j$ on an open subinterval of $(0,1)$. It follows that  $\ell(\hat{\theta}^{(2)} + \kappa \eta)$ is constant on that interval. Note that the value of $\ell(\hat{\theta}^{(2)} + \kappa \eta)$ depends on $\kappa$ only through the term $\kappa \sum_{j=1}^p \eta_j B_j$. Since this term must be a constant for all $\kappa$ in the subinterval, we find that $ \sum_{j=1}^p \eta_j B_j = 0$. The linear independence of $\set{B_j}_{j=1}^p$ implies that $\eta_j = 0$ for all $j$ so that $\hat{\theta}^{(1)} = \hat{\theta}^{(2)}$, which establishes the uniqueness of the estimator.
\end{proof}
\resetconstant

\subsubsection*{Technical lemmas}

In the sequel, we adopt the notations used to denote the vectorized quantities in the proof of Proposition \ref{prop:lambda-max}.

\begin{lemma}\label{le:ip}
	For symmetric matrices $B$ and $D$ in $\R^{d \times d}$, 
	\[
	\av{\ip{B}{D}}
	\le 
	\norm{B}_* \norm{D}_2 
	\le 
	\sqrt{\rank(B)} \fr{B} \norm{D}_2.
	\]
	where $\norm{\cdot}_*$ is the nuclear norm of a matrix.
\end{lemma}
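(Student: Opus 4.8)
The plan is to prove the two inequalities separately: the first is a trace-duality (matrix Hölder) bound, and the second is a Cauchy--Schwarz bound on the eigenvalues.

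For the first inequality, I would diagonalize the symmetric matrix $B$, writing $B = \sum_{i=1}^d \lambda_i w_i w_i^\top$ where $\set{w_i}_{i=1}^d$ is an orthonormal basis of eigenvectors and $\lambda_i \in \R$ the associated eigenvalues, so that $\norm{B}_* = \sum_{i=1}^d \av{\lambda_i}$. Then
\[
\ip{B}{D} = \tr(B^\top D) = \tr(BD) = \sum_{i=1}^d \lambda_i\, w_i^\top D w_i ,
\]
and I bound each summand by $\av{w_i^\top D w_i} \le \norm{w_i}_2 \norm{D w_i}_2 \le \norm{D}_2$, using Cauchy--Schwarz, the definition $\norm{D}_2 = \sup_{\norm{w}_2 = 1}\norm{Dw}_2$, and $\norm{w_i}_2 = 1$. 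Summing yields $\av{\ip{B}{D}} \le \norm{D}_2 \sum_{i=1}^d \av{\lambda_i} = \norm{B}_* \norm{D}_2$.

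For the second inequality, let $r = \rank(B)$, so exactly $r$ of the $\lambda_i$ are nonzero. Applying Cauchy--Schwarz in $\R^r$ to the vector of nonzero eigenvalues against the all-ones vector gives
\[
\norm{B}_* = \sum_{i : \lambda_i \ne 0} \av{\lambda_i} \le \sqrt{r}\,\Big(\sum_{i=1}^d \lambda_i^2\Big)^{1/2} = \sqrt{\rank(B)}\;\fr{B},
\]
where the last equality uses $\fr{B}^2 = \tr(B^2) = \sum_{i=1}^d \lambda_i^2$ for symmetric $B$. Chaining the two displays finishes the argument.

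I do not expect a genuine obstacle here; the statement is a textbook matrix inequality. The only points that merit a word of justification are the elementary facts that $\av{w^\top D w} \le \norm{D}_2$ for a unit vector $w$ and that the Frobenius norm of a symmetric matrix equals the $\ell_2$ norm of its eigenvalue vector — both standard, and both I would simply invoke.
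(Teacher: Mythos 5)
Your proof is correct and follows essentially the same route as the paper: diagonalize the symmetric matrix $B$, bound the diagonal entries of the conjugated $D$ by $\norm{D}_2$ to obtain the nuclear-norm bound, and then apply Cauchy--Schwarz to the nonzero eigenvalues (the paper uses singular values, which for symmetric $B$ are the $\av{\lambda_i}$) to pick up the $\sqrt{\rank(B)}$ factor. The only cosmetic difference is that the paper first establishes $\norm{E}_2 \ge \max_{k,j}\av{E_{kj}}$ and invokes orthogonal invariance, whereas you bound $\av{w_i^\top D w_i} \le \norm{D}_2$ directly; both are the same estimate.
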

\begin{proof}
	Let $e_j$ denote the vector with $1$ in the $j$-th element and $0$ elsewhere. Observe, for an arbitrary matrix $E \in \R^{d \times d}$ and $j=1,\ldots, d$,
	\[
	\norm{E}_2
	= 
	\sup_{x \ne 0} \frac{\av{Ex}}{\av{x}} 
	\ge 
	\frac{\av{Ee_j}}{\av{e_j}} 
	=
	\av{\c_j(E)}
	\ge \av{E_{lj}} \for l=1,\ldots, d
	\]
	so that $\norm{E}_2 \ge \max_{1 \le k,j \le d} \av{E_{kj}}$. Consider the singular value decomposition of $B=PMP^\top$, where $P^\top P = PP^\top = I$ and $M$ consists of the descending order singular values $\mu_1(B),\ldots, \mu_d(B)$. 
	Observe 
	\begin{align*}
		\ip{B}{D}
		&=
		\tr(B^\top D)
		=
		\tr(PMP^\top D)
		\\
		&=
		\tr(MP^\top D P)
		=
		\sum_{j=1}^d (P^\top D P)_{jj} \mu_j(B).
	\end{align*}
	Using the upper bound property of the spectral norm proven above and the orthogonal invariance of the spectral norm, we have
	\begin{align*}
		\av{\ip{B}{D}}
		&\le 
		\norm{P^\top D P}_2 \sum_{j=1}^d \mu_j(B)
		=
		\norm{D}_2\sum_{j=1}^{\rank(B)} \mu_j(B)
		=
		\norm{D}_2 \norm{B}_*.
	\end{align*}
	In addition, the Cauchy-Schwarz inequality implies that
	\begin{align*}
		\av{\ip{B}{D}}
		&\le 
		\norm{D}_2\sum_{j=1}^{\rank(B)} \mu_j(B)
		\le 
		\norm{D}_2\sqrt{\sum_{j=1}^{\rank(B)} 1} \sqrt{\sum_{j=1}^{\rank(B)}  \mu_j^2(B)}
		=
		\sqrt{\rank(B)} \fr{B} \norm{D}_2.
	\end{align*}
\end{proof}

\begin{lemma}\label{le:ip-bd}
	Suppose Assumption \ref{cond:B2-bd} holds.
	For $\varepsilon = (\v(\varepsilon_1)^\top, \ldots, \v(\varepsilon_n)^\top)^\top \in \R^{r}$, we have
	\[
	(\X^\top \varepsilon)^\top (\hat{\theta} - \theta^*)
	\le 
	u_p \Mr{B2-bd} \Norm{\sum_{i=1}^n \varepsilon_i}_2 \av{\hat{\theta} - \theta^*}_1.
	\]
\end{lemma}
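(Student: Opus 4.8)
The plan is to expand the left-hand side using the block structure of $\X$ and then apply Lemma~\ref{le:ip}. Recall from the proof of Proposition~\ref{prop:lambda-max} that $\X$ has $n$ identical block rows, each row being $\BM \v(B_1) & \cdots & \v(B_p)\EM$. Hence for $\varepsilon = (\v(\varepsilon_1)^\top,\ldots,\v(\varepsilon_n)^\top)^\top$ we have, for each coordinate $j$, $(\X^\top \varepsilon)_j = \sum_{i=1}^n \v(B_j)^\top \v(\varepsilon_i) = \sum_{i=1}^n \ip{B_j}{\varepsilon_i} = \ip{B_j}{\sum_{i=1}^n \varepsilon_i}$, using linearity of the Frobenius inner product. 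So $\X^\top \varepsilon$ is the vector in $\R^p$ whose $j$-th entry is $\ip{B_j}{\bar{E}}$ where $\bar{E} := \sum_{i=1}^n \varepsilon_i$.

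Next I would bound $(\X^\top \varepsilon)^\top(\hat\theta - \theta^*)$ by Hölder's inequality: it is at most $\av{\X^\top\varepsilon}_\infty \, \av{\hat\theta - \theta^*}_1$. It then remains to control $\av{\X^\top \varepsilon}_\infty = \max_{1\le j\le p}\av{\ip{B_j}{\bar E}}$. For this I apply Lemma~\ref{le:ip} with $B = B_j$ and $D = \bar E$ (note $\bar E$ is symmetric since each $\varepsilon_i$ lives in $\R^{d\times d}$ and the $Z_i$ and the linear predictor are symmetric, so effectively $\varepsilon_i$ is symmetric in the relevant model \eqref{eq:reg_model}), which gives $\av{\ip{B_j}{\bar E}} \le \sqrt{\rank(B_j)}\,\fr{B_j}\,\norm{\bar E}_2$. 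Taking the maximum over $j$ and invoking Assumption~\ref{cond:B2-bd} ($\fr{B_j}\le \Mr{B2-bd}$) together with the definition $u_p = \max_j \sqrt{\rank(B_j)}$ yields $\max_j\av{\ip{B_j}{\bar E}} \le u_p\,\Mr{B2-bd}\,\norm{\sum_{i=1}^n\varepsilon_i}_2$. Combining with the Hölder step gives exactly the claimed inequality.

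I do not expect a genuine obstacle here; this lemma is essentially a bookkeeping step that packages the block structure of $\X$, the duality between $\ell_1$ and $\ell_\infty$, and the nuclear/spectral norm bound of Lemma~\ref{le:ip} into a single clean inequality. The only point requiring a little care is the symmetry of $\varepsilon_i$ (so that Lemma~\ref{le:ip}, stated for symmetric matrices, applies to $D = \sum_i \varepsilon_i$): this is fine because in the matrix regression model \eqref{eq:reg_model} both $Z_i$ and $\alpha_0 I + \Sigma_G + \Sigma_R$ are symmetric, forcing $\varepsilon_i$ to be symmetric as well. If one wanted to avoid even this remark, one could instead bound $\av{\ip{B_j}{\varepsilon_i}}$ using the symmetric part of $\varepsilon_i$, since $\ip{B_j}{\varepsilon_i} = \ip{B_j}{(\varepsilon_i + \varepsilon_i^\top)/2}$ when $B_j$ is symmetric; but under \eqref{eq:reg_model} this is unnecessary.
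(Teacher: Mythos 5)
Your proof is correct and follows essentially the same route as the paper's: both expand $(\X^\top\varepsilon)_j$ into $\ip{B_j}{\sum_i\varepsilon_i}$ via the block structure, apply Lemma~\ref{le:ip} together with Assumption~\ref{cond:B2-bd} and the definition of $u_p$, and conclude by $\ell_1$--$\ell_\infty$ duality (the paper does this term by term rather than quoting H\"older, which is the same computation). Your remark on the symmetry of $\varepsilon_i$ is a reasonable extra precaution that the paper leaves implicit.
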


\begin{proof}
	It follows from Lemma \ref{le:ip} and Assumption \ref{cond:B2-bd} that
	\begin{align*}
		(\X^\top \varepsilon)^\top (\hat{\theta} - \theta^*)
		&\le 
		\sum_{j=1}^p \av{\c_j(\X)^\top \varepsilon} \av{\hat{\theta}_j - \theta_j^*}
		=
		\sum_{j=1}^p \Av{\sum_{i=1}^n \v(B_j)^\top \v(\varepsilon_i)} \av{\hat{\theta}_j - \theta_j^*}
		\\
		&=
		\sum_{j=1}^p \Av{\sum_{i=1}^n \v(B_j)^\top \v(\varepsilon_i)} \av{\hat{\theta}_j - \theta_j^*}
		=
		\sum_{j=1}^p \Av{ \ip{B_j}{\sum_{i=1}^n \varepsilon_i}} \av{\hat{\theta}_j - \theta_j^*}
		\\
		&\le 
		\Norm{\sum_{i=1}^n \varepsilon_i}_2 \sum_{j=1}^p \sqrt{\rank(B_j)}\norm{B_j}_2 \av{\hat{\theta}_j - \theta^*_j}
		\le 
		u_p \Mr{B2-bd} \Norm{\sum_{i=1}^n \varepsilon_i}_2 \av{\hat{\theta} - \theta^*}_1.
	\end{align*}
\end{proof}

\begin{lemma}\label{le:risk-ub}
	Suppose Assumption \ref{cond:B2-bd} holds. Then, we have
	\[
	\fr{\hat{\Sigma} - \Sigma }^2 
	\le 
	\frac{2 u_p \Mr{B2-bd}}{n} \Norm{\sum_{i=1}^n \varepsilon_i}_2 \av{\theta^* - \hat{\theta}}_1
	+
	\frac{2\lambda}{n} (\av{\theta^*}_1 - \av{\hat{\theta}}_1).
	\]
\end{lemma}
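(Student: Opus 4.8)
\textbf{Proof proposal for Lemma~\ref{le:risk-ub}.}
The plan is to exploit the basic inequality that comes from the fact that $\hat{\theta}$ minimizes the penalized objective $\ell^\lambda(\cdot)$. First I would write $\ell^\lambda(\hat{\theta}) \le \ell^\lambda(\theta^*)$ in vectorized notation, i.e. $\av{\Z - \X\hat{\theta}}^2 + 2\lambda\av{\hat{\theta}}_1 \le \av{\Z - \X\theta^*}^2 + 2\lambda\av{\theta^*}_1$, using the notation $\Z,\X,\Omega$ from the proof of Proposition~\ref{prop:lambda-max}. Then I would substitute $\Z = \Omega + \varepsilon = \X\theta^* + \varepsilon$, where $\varepsilon = (\v(\varepsilon_1)^\top,\ldots,\v(\varepsilon_n)^\top)^\top$, and expand both squared norms. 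The cross terms and the $\av{\varepsilon}^2$ term cancel appropriately, leaving
\[
\av{\X\hat{\theta} - \X\theta^*}^2 \le 2\,(\X^\top\varepsilon)^\top(\hat{\theta} - \theta^*) + 2\lambda\big(\av{\theta^*}_1 - \av{\hat{\theta}}_1\big).
\]

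Next I would translate both sides back to matrix form. On the left, $\av{\X\hat{\theta} - \X\theta^*}^2 = n\,\fr{\Gamma_{\hat{\theta}} - \Gamma_{\theta^*}}^2 = n\,\fr{\hat{\Sigma} - \Sigma}^2$, since each block row of $\X$ equals $[\v(B_1)\cdots\v(B_p)]$ and there are $n$ identical block rows; here I use $\hat{\Sigma} = \Gamma_{\hat{\theta}}$ and $\Sigma = \Gamma_{\theta^*}$. On the right, I would bound the first term directly by Lemma~\ref{le:ip-bd}, which gives $(\X^\top\varepsilon)^\top(\hat{\theta} - \theta^*) \le u_p \Mr{B2-bd}\norm{\sum_{i=1}^n \varepsilon_i}_2 \av{\hat{\theta} - \theta^*}_1$. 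Dividing through by $n$ then yields exactly the claimed inequality.

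There is no real obstacle here; this is the standard "basic inequality" step of lasso-type analyses, and the only points requiring care are bookkeeping: verifying the algebraic cancellation in the expansion of $\av{\Z-\X\hat\theta}^2 - \av{\Z-\X\theta^*}^2$, and correctly tracking the factor $n$ that arises because the vectorized design $\X$ stacks $n$ copies of the basis blocks so that $\av{\X\theta}^2 = n\fr{\Gamma_\theta}^2$ and $(\X^\top\varepsilon)^\top\theta$ collapses to $\sum_{i=1}^n\ip{\Gamma_\theta}{\varepsilon_i}$. The heavy lifting — converting $\ip{B_j}{\sum_i\varepsilon_i}$ into the rank/operator-norm bound — has already been done in Lemma~\ref{le:ip} and Lemma~\ref{le:ip-bd}, so this lemma is essentially just assembling those pieces. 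The subsequent concentration bounds on $\norm{\sum_{i=1}^n\varepsilon_i}_2$ under Assumptions~\ref{cond:sub-Gaussian} or~\ref{cond:Bernstein-moments}, together with a rearrangement absorbing the $\av{\theta^*-\hat\theta}_1$ term using $\av{\theta^*}_1 - \av{\hat\theta}_1 \le \av{\theta^* - \hat\theta}_1$ and the choice of $\lambda$, will be what turns this into Theorems~\ref{th:nonasymptotic-bd-gauss} and~\ref{th:nonasymptotic-bd}, but those steps lie beyond the present statement.
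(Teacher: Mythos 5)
Your proposal is correct and follows essentially the same route as the paper: the basic inequality $\ell^\lambda(\hat{\theta}) \le \ell^\lambda(\theta^*)$ in the vectorized notation, cancellation of the $\av{\varepsilon}^2$ terms, identification of $\av{\X\hat{\theta}-\X\theta^*}^2 = n\fr{\hat{\Sigma}-\Sigma}^2$, an application of Lemma~\ref{le:ip-bd} to the cross term, and division by $n$. The paper compresses the expansion into ``some basic algebra,'' so your more explicit bookkeeping is if anything a slight improvement in readability.
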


\begin{proof}
	Due to the optimality of $\hat{\theta}$, we have $\ell^\lambda(\hat{\theta})\le \ell^\lambda(\theta^*)$, and some basic algebra would result in the following inequality
	\begin{align*}
		n\fr{\hat{\Sigma} - \Sigma}^2
		&= 
		\av{\X \hat{\theta} - \X \theta^*}^2
		\\
		&\le 
		2 (\X^\top \varepsilon)^\top (\hat{\theta} - \theta^*)
		+2\lambda (\av{\theta^*}_1 - \av{\hat{\theta}}_1)
		\\
		&\le 
		2 u_p \Mr{B2-bd} \Norm{\sum_{i=1}^n \varepsilon_i}_2 \av{\theta^* - \hat{\theta}}_1
		+2\lambda (\av{\theta^*}_1 - \av{\hat{\theta}}_1).
	\end{align*}
	We applied Lemma \ref{le:ip-bd} in the last inequality. Dividing both sides by $n$ yields the desired result.
\end{proof}

\subsubsection*{Large deviation}

Lemma \ref{le:hoef-ineq} and Lemma \ref{le:bernstein-ineq} present matrix generalization of the Hoeffding and Bernstein inequalities. Proofs can be found in, for example, \cite{wainwright2019high}.

\begin{lemma}\label{le:hoef-ineq}
	Consider a sequence of independent, zero-mean, symmetric random matrices $\set{X_i}_{i=1}^n$ in $\R^{d \times d}$ that satisfy the sub-Gaussian condition with parameters $\set{W_i}_{i=1}^n$. Then, for all $t > 0$, we have
	\[
	\pr\pa{\frac{1}{n} \Norm{\sum_{i=1}^n X_i}_2 > t}
	\le 
	2 \rank\pa{\sum_{i=1}^n V_i} \exp\pa{-\frac{nt^2}{2\sigma_{W,n}^2}}.
	\]
\end{lemma}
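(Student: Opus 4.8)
The plan is to run the classical matrix Laplace-transform (Chernoff) argument. Write $S_n = \sum_{i=1}^n X_i$, which is symmetric, so $\norm{S_n}_2 = \max\set{\lambda_{\max}(S_n),\ \lambda_{\max}(-S_n)}$; a union bound then produces the leading factor $2$, and since $-X_i$ also satisfies the sub-Gaussian condition with parameter $W_i$ it suffices to bound $\pr(\lambda_{\max}(S_n) \ge u)$ with $u = nt$. For any $\theta > 0$, the estimate $e^{\theta\lambda_{\max}(S_n)} \le \tr e^{\theta S_n}$ combined with Markov's inequality gives
\[
\pr\pa{\lambda_{\max}(S_n) \ge u} \le e^{-\theta u}\,\E\br{\tr e^{\theta S_n}}.
\]

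Next I would control $\E[\tr e^{\theta S_n}]$ by the subadditivity of matrix cumulant generating functions (Lieb's concavity theorem, in the form of Tropp's master inequality):
\[
\E\br{\tr e^{\theta S_n}} \le \tr\exp\pa{\sum_{i=1}^n \log \E\br{e^{\theta X_i}}}.
\]
The sub-Gaussian hypothesis reads $\E[e^{\theta X_i}] = m_{X_i}(\theta)\preceq e^{\theta^2 W_i/2}$; since $\E[e^{\theta X_i}] \succeq I$ (Jensen, using $\E X_i = 0$), the operator monotone map $\log(\cdot)$ applies and yields $0 \preceq \log\E[e^{\theta X_i}]\preceq \tfrac{\theta^2}{2}W_i$, and monotonicity of $A \mapsto \tr e^A$ with respect to $\preceq$ gives $\E[\tr e^{\theta S_n}] \le \tr\exp\pa{\tfrac{\theta^2}{2}\sum_{i=1}^n W_i}$. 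Because the cumulant matrix $\sum_i \log\E[e^{\theta X_i}]$ is positive semidefinite and dominated by $\tfrac{\theta^2}{2}\sum_i W_i$, it is supported on the range of $\sum_i W_i$; restricting the trace exponential to that range bounds it by $\rank\pa{\sum_{i=1}^n W_i}\exp\pa{\tfrac{\theta^2}{2}\Norm{\sum_{i=1}^n W_i}_2}$.

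Finally I would optimize over $\theta$. Writing $\sigma^2 := \Norm{\sum_{i=1}^n W_i}_2 = n\,\sigma_{W,n}^2$ and $u = nt$, the exponent $-\theta u + \tfrac{\theta^2}{2}\sigma^2$ is minimized at $\theta = u/\sigma^2$, with minimal value $-u^2/(2\sigma^2) = -nt^2/(2\sigma_{W,n}^2)$. Combining this with the factor $2$ from the two tails of $\norm{S_n}_2$, the rank pre-factor, and the rescaling $\norm{S_n}_2 > nt \iff \tfrac1n\norm{S_n}_2 > t$, yields the claimed bound.

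The only genuinely nontrivial ingredient is the matrix cumulant subadditivity step (Lieb's theorem): it is precisely what fails for a naive componentwise treatment and what lets the scalar Chernoff recipe carry over verbatim. The remaining points — operator monotonicity of $\log$ and of $\tr\exp(\cdot)$, the range containment that turns the ambient dimension $d$ into $\rank(\sum_i W_i)$, and the one-variable optimization in $\theta$ — are routine. Since the statement is quoted from \cite{wainwright2019high}, one may invoke that reference for the subadditivity and spell out only the specialization to the sub-Gaussian parameters $\set{W_i}_{i=1}^n$.
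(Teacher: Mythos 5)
The paper does not actually prove this lemma: it states that Lemmas \ref{le:hoef-ineq} and \ref{le:bernstein-ineq} are matrix Hoeffding/Bernstein inequalities whose "proofs can be found in, for example, \cite{wainwright2019high}." Your argument is precisely the standard matrix Chernoff proof from that reference (two-sided union bound, $e^{\theta\lambda_{\max}}\le\tr e^{\theta S_n}$, Markov, Lieb/Tropp subadditivity of the matrix cumulant generating function, operator monotonicity of $\log$ and of $\tr\exp$, then optimize $\theta=u/\sigma^2$), and the exponent bookkeeping $-u^2/(2\sigma^2)=-nt^2/(2\sigma_{W,n}^2)$ with $u=nt$, $\sigma^2=\Norm{\sum_i W_i}_2=n\sigma_{W,n}^2$ is correct. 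You also implicitly fix the paper's typo ($V_i$ should be $W_i$ in the rank prefactor).

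One step is not right as written: from $0\preceq C:=\sum_i\log\E[e^{\theta X_i}]\preceq \frac{\theta^2}{2}\sum_i W_i=:D$ you conclude $\tr e^{C}\le \rank(D)\,e^{\norm{D}_2}$. But a PSD matrix $C$ of rank $k\le r$ in $\R^{d\times d}$ has $d-k$ zero eigenvalues, each contributing $1$ to the trace, so all you get this way is $\tr e^{C}\le (d-r)+r\,e^{\norm{D}_2}$, and the extra $(d-r)$ does not disappear after optimizing $\theta$. The standard fix is to move the rank reduction upstream: since $\E[e^{tX_i}]\succeq I$ with equality of quadratic forms on $\ker W_i$, one has $X_i v=0$ almost surely for $v\in\ker W_i$, hence $S_n$ is a.s.\ supported on the range $R$ of $\sum_i W_i$; for $u>0$ the event $\set{\lambda_{\max}(S_n)\ge u}$ coincides with the same event for $S_n|_R$, and running your entire argument on the $r$-dimensional space $R$ gives the prefactor $\rank\pa{\sum_i W_i}$ cleanly. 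With that adjustment (or by simply citing the reference for this refinement, as the paper does), your proof is complete.
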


\begin{lemma}\label{le:bernstein-ineq}
	Consider a sequence of independent, zero-mean, symmetric random matrices $\set{X_i}_{i=1}^n$ in $\R^{d \times d}$ that satisfy the Bernstein moment condition with parameter $b > 0$. Then, for all $t > 0$, we have
	\[
	\pr\pa{\frac{1}{n} \Norm{\sum_{i=1}^n X_i}_2 > t}
	\le 
	2 \rank\pa{\sum_{i=1}^n \Var(X_i)} \exp\pa{-\frac{nt^2}{2(\sigma_{X,n}^2 + bt)}},
	\]
	where $\sigma_{X,n}^2 = \frac{1}{n} \norm{\sum_{i=1}^n \Var(X_i)}_2$.
\end{lemma}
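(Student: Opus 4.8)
The plan is to run the standard matrix Laplace-transform (Ahlswede--Winter / Tropp) argument; the statement is classical and can be quoted from \cite{tropp2012user,wainwright2019high}, but here is the route I would take. Write $S = \sum_{i=1}^n X_i$. Since each $X_i$ is symmetric, $\Norm{S}_2 = \max\set{\lambda_{\max}(S),\ \lambda_{\max}(-S)}$, and as $\set{-X_i}_{i=1}^n$ is again a mean-zero family obeying the Bernstein moment condition with the same $b$ and the same variances (understanding the bound to hold for $\pm X_i$, as is standard), it suffices to prove the one-sided estimate $\pr\pa{\lambda_{\max}(S) \ge nt} \le \rank\pa{\sum_i \Var(X_i)}\exp\pa{-\tfrac{nt^2}{2(\sigma_{X,n}^2 + bt)}}$ and then union-bound the two tails, which supplies the factor $2$. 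The one-sided bound starts from the exponential Markov inequality for $e^{\theta\lambda_{\max}(S)}$: for $\theta > 0$,
\[
\pr\pa{\lambda_{\max}(S) \ge nt} \le e^{-n\theta t}\,\E\br{\lambda_{\max}(e^{\theta S})} \le e^{-n\theta t}\,\E\br{\tr(e^{\theta S})},
\]
using $\lambda_{\max}(e^{\theta S}) = e^{\theta\lambda_{\max}(S)}$ and positivity of the eigenvalues of $e^{\theta S}$.

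The second step is a dimension reduction producing $r := \rank\pa{\sum_i \Var(X_i)}$ in place of the ambient $d$. For each $i$, if $u \in \ker\Var(X_i) = \ker\E[X_i^2]$ then $\E[\norm{X_i u}^2] = u^\top\E[X_i^2]u = 0$, so $X_i u = 0$ almost surely; hence $\mathrm{range}(X_i) \subseteq \mathrm{range}(\Var(X_i))$ almost surely. Since the $\Var(X_i)$ are positive semidefinite, the subspace $\mc{V} := \mathrm{range}\pa{\sum_i \Var(X_i)}$, of dimension $r$, contains each of these ranges, so $\mathrm{range}(S) \subseteq \mc{V}$ on an event of probability one. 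Consequently $X_i = 0_{\mc{V}^\perp} \oplus X_i|_{\mc{V}}$, the restricted matrices $\set{X_i|_{\mc{V}}}_{i=1}^n$ form an independent mean-zero symmetric family in $\R^{r\times r}$ still satisfying the Bernstein condition with the same parameters (restriction respects $\preceq$ and commutes with matrix powers), and on $\set{\lambda_{\max}(S) \ge nt}$, $nt>0$, we have $\lambda_{\max}(e^{\theta S}) = \lambda_{\max}(e^{\theta S|_{\mc{V}}})$. Thus the display above improves to $\pr\pa{\lambda_{\max}(S) \ge nt} \le e^{-n\theta t}\,\E\br{\tr_{\mc{V}}(e^{\theta S|_{\mc{V}}})}$, and the problem is now genuinely $r$-dimensional.

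From here the argument is the textbook one in $\R^r$. Lieb's concavity theorem gives $\E\br{\tr_{\mc{V}}(e^{\theta S|_{\mc{V}}})} \le \tr_{\mc{V}}\exp\pa{\sum_i \log\E[e^{\theta X_i|_{\mc{V}}}]}$. The Bernstein moment bound, fed into the expansion $\E[e^{\theta X_i|_{\mc{V}}}] = I + \sum_{k\ge 2}\tfrac{\theta^k}{k!}\E[(X_i|_{\mc{V}})^k]$, yields the usual cumulant estimate $\log\E[e^{\theta X_i|_{\mc{V}}}] \preceq g(\theta)\,\Var(X_i|_{\mc{V}})$ with $g(\theta) = \tfrac{\theta^2}{2(1-b\theta)}$ for $0<\theta<1/b$ (using $I + A \preceq e^A$ for $A \succeq 0$ and operator monotonicity of $\log$). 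Summing, using $\preceq$-monotonicity of $A \mapsto \tr e^A$, and noting $\Norm{(\sum_i \Var(X_i))|_{\mc{V}}}_2 = \Norm{\sum_i \Var(X_i)}_2 = n\sigma_{X,n}^2$ on the $r$-dimensional space $\mc{V}$, gives $\E\br{\tr_{\mc{V}}(e^{\theta S|_{\mc{V}}})} \le r\exp\pa{g(\theta)\,n\sigma_{X,n}^2}$, whence $\pr\pa{\lambda_{\max}(S) \ge nt} \le r\exp\pa{-n\theta t + g(\theta)\,n\sigma_{X,n}^2}$ for every $\theta \in (0,1/b)$. The exponent is then optimized by the classical scalar Bernstein calculation ($\theta = t/(\sigma_{X,n}^2 + bt)$ gives $-nt^2/\{2(\sigma_{X,n}^2 + bt)\}$), and the two-tail union bound finishes the proof.

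I expect the only genuinely non-routine point to be the dimension reduction, i.e. obtaining $\rank\pa{\sum_i \Var(X_i)}$ rather than $d$ as the prefactor; the key (and somewhat easily overlooked) input is the almost-sure inclusion $\mathrm{range}(X_i) \subseteq \mathrm{range}(\Var(X_i))$, which follows from the second-moment identity $\E[\norm{X_i u}^2] = u^\top\E[X_i^2]u$. All the remaining ingredients — the exponential Markov inequality, Lieb's theorem, operator monotonicity of $\log$, $\preceq$-monotonicity of $\tr e^{(\cdot)}$, and the scalar Bernstein optimization — are standard, and in fact for the applications in this paper only the cruder bound $\rank\pa{\sum_i \Var(\varepsilon_i)} \le d$ is needed (as used in the proofs of Theorems \ref{th:nonasymptotic-bd-gauss}--\ref{th:nonasymptotic-bd}), so one may equally well cite \cite{tropp2012user,wainwright2019high} directly.
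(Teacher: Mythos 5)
Your argument is correct and is precisely the standard matrix-Bernstein proof (exponential Markov, Lieb/master bound, cumulant estimate, scalar optimization) from the references the paper itself cites for this lemma --- the paper gives no proof of its own, deferring to \cite{tropp2012user} and \cite{wainwright2019high}, so your route coincides with the intended one, and your dimension-reduction step legitimately recovers the $\rank\pa{\sum_i \Var(X_i)}$ prefactor. Your remark that the stated moment condition must be read as holding for $\pm X_i$ (since for odd $k$ the bound on $\E[X_i^k]$ does not transfer to $-X_i$) in order to union-bound the two tails of $\Norm{\cdot}_2$ is a correct reading of a convention the paper glosses over.
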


A large deviation bound similar to Lemma \ref{le:bernstein-ineq} can be derived under the assumption \eqref{eq:moment_ineq}. This result guarantees that the results of Theorem \ref{th:nonasymptotic-bd} and Remark \ref{rem:asymp-rate} for sub-exponential case hold when $Y_i$, $i=1,\ldots, n$, follows a sub-Gaussian distribution.
\begin{lemma}\label{le:bernstein-ineq-subGaussian}
	Suppose that a sequence of independent, zero-mean, symmetric random matrices $\set{X_i}_{i=1}^n$ in $\R^{d \times d}$ satisfy the moment condition \eqref{eq:moment_ineq}. Then, for all $t > 0$, we have
	\[
	\pr\pa{\frac{1}{n} \Norm{\sum_{i=1}^n X_i}_2 > t}
	\le 
	2 d \exp\pa{-\frac{nt^2}{2(1 + bt)}}.
	\]
\end{lemma}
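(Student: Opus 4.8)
The plan is to mimic the standard Chernoff-type argument for matrix Bernstein, but starting from the moment hypothesis \eqref{eq:moment_ineq} rather than from a direct bound on $m_{X_i}(t)$. First I would convert the moment bound into a bound on the matrix moment generating function: since $\E[X_i^k] \preceq \tfrac12 k! B^k I$ for $k \ge 2$ and $\E[X_i] = 0$, summing the exponential series gives, for $|t| < 1/B$,
\[
m_{X_i}(t) = I + \sum_{k \ge 2} \frac{t^k}{k!}\E[X_i^k] \preceq I + \frac{t^2 B^2}{2}\sum_{k\ge 0}(tB)^k\, I = \left(I + \frac{t^2 B^2}{2(1-tB)} I\right),
\]
and then using $1 + x \le e^x$ in the PSD order (valid because both sides are functions of the same matrix, here a multiple of $I$) yields $m_{X_i}(t) \preceq \exp\!\big(\tfrac{t^2 B^2}{2(1-tB)} I\big)$. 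Actually, to match the stated bound with $b$ in place of $B^2$ and $B$ simultaneously, I would instead carry the two quantities $\sigma^2 := B^2$ (variance proxy, here $\le 1$ after the normalization implicit in \eqref{eq:moment_ineq}, or more precisely $\Var(X_i)\preceq B^2 I$ via the $k=2$ case) and $b := B$ separately; the ratio test on the series gives $m_{X_i}(t) \preceq \exp\!\big(\tfrac{t^2 \sigma^2/2}{1-b t} I\big)$ for $0 \le t < 1/b$.

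Next I would apply the Ahlswede–Winter / Tropp master inequality: for symmetric random matrices,
\[
\pr\!\left(\lambda_{\max}\!\Big(\sum_{i=1}^n X_i\Big) > s\right) \le d \cdot \inf_{t>0} e^{-ts}\,\Big\|\prod_i m_{X_i}(t)\Big\|_2 \le d\cdot \inf_{0<t<1/b} \exp\!\left(-ts + \frac{n t^2 \sigma^2/2}{1-bt}\right),
\]
using subadditivity of the cumulant-like quantities in the PSD order (this is where the $\rank$ factor in Lemma \ref{le:bernstein-ineq} is replaced by the crude bound $d$). Applying the same to $-\sum_i X_i$ and a union bound handles $\|\sum_i X_i\|_2 = \max\{\lambda_{\max}, -\lambda_{\min}\}$, producing the factor $2d$. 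Then I would optimize: set $s = nt$, i.e. bound the deviation event $\{\tfrac1n\|\sum X_i\|_2 > t\}$, and choose the free parameter (call it $u$) as $u = t/(\sigma^2 + bt)$, the standard Bernstein choice, which lies in $(0,1/b)$ and gives exponent $-\tfrac{n t^2}{2(\sigma^2 + bt)}$. Absorbing the normalization $\sigma^2 \le 1$, $b\le$ the constant from \eqref{eq:moment_ineq} (i.e. identifying the "$b$" in the statement with this constant and using $\Var(X_i)\preceq I$ so $\sigma^2_{X,n}\le 1$) yields exactly $2d\exp\!\big(-\tfrac{nt^2}{2(1+bt)}\big)$.

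The main obstacle is the bookkeeping between the single constant $B$ appearing in \eqref{eq:moment_ineq} and the two roles — variance proxy and sub-exponential scale — it must play to land on the clean form $\tfrac{nt^2}{2(1+bt)}$; one has to check that the $k=2$ term of \eqref{eq:moment_ineq} legitimately bounds $\Var(X_i)=\E[X_i^2]\preceq B^2 I \preceq I$ under whatever normalization is in force (so the "$1$" in the denominator is justified), and that the geometric-series tail $\sum_{k\ge2}(tB)^{k-2}$ is controlled on the relevant range of $t$; the matrix-analytic steps (operator monotonicity of $x\mapsto e^x$ restricted to commuting/scalar matrices, Tropp's subadditivity lemma) are standard and I would simply cite \cite{tropp2012user} or \cite{wainwright2019high} for them, paralleling the proof of Lemma \ref{le:bernstein-ineq}.
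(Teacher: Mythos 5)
Your proposal is correct and follows essentially the same route as the paper's proof: convert \eqref{eq:moment_ineq} into a matrix MGF bound via the geometric series, invoke the trace/subadditivity master inequality (Lemma 6.13 of \cite{wainwright2019high}) with the crude factor $d$ in place of the rank, union-bound over $\pm\sum_i X_i$, and optimize the Chernoff exponent with the standard Bernstein choice of parameter (the paper takes $s=t/(1+bt)$, you take $u=t/(\sigma^2+bt)$, which is immaterial). The normalization issue you flag --- that the ``$1$'' in the denominator implicitly requires the variance proxy from the $k=2$ case of \eqref{eq:moment_ineq} to be at most $1$ --- is real, but it is present in the paper's own argument as well, so your write-up is if anything more careful on this point.
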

\begin{proof}
	Under \eqref{eq:moment_ineq}, we have 
	\[
	\E[X_i^k] \preceq  \frac{1}{2} k! b^{k} I \for k =2,3,\ldots \quad \text{and} \quad i = 1,\ldots, n.
	\]
	It follows that, for $\av{s} \le  1/b$, 
	\begin{align*}
		m_{X_i}(s) 
		&=
		I + \sum_{j=2}^\infty \frac{s^j \E[X_i^j]}{j!}
		\preceq 
		I + \frac{b^2s^2 I}{2(1-b\av{s})}
		\\
		&\preceq 
		\exp\pa{\frac{b^2s^2 I}{2(1-b\av{s}}}.
	\end{align*}
	This with Lemma 6.13 of \cite{wainwright2019high} implies that 
	\[
	\tr\pa{\exp\pa{\sum_{i=1}^{n} \log m_{X_i}(s)}}
	\le 
	\tr\pa{\exp\pa{\frac{b^2s^2 I}{2(1-b\av{s}}}}
	\le 
	d \exp\pa{\frac{b^2 s^2}{1-b\av{s}}},
	\]
	for $i=1,\ldots, n$.
	The Chernoff's bound implies, for $t>0$,  
	\[
	\pr\pa{\frac{1}{n} \Norm{\sum_{i=1}^n X_i}_2 > t}
	\le 
	2d \exp\pa{\frac{nb^2 s^2}{1-b\av{s}} - nts},
	\]
	and the desired result follows for $s = t/(1 + bt)$.
\end{proof}

\begin{lemma}\label{le:large-dev-gauss}
	Suppose Assumption \ref{cond:sub-Gaussian} holds.
	Define an event 
	\[
	\mc{A} =  \set{\frac{1}{n} \Norm{\sum_{i=1}^n \varepsilon_i}_2 \le \frac{\lambda}{\Mr{B2-bd} u_p n} }.
	\]
	Choosing $\lambda$ as \eqref{eq:lambda-gauss}, we obtain
	\[
	\pr(\mc{A}^c) \le \nu.
	\]
\end{lemma}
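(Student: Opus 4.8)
The plan is to apply the matrix Hoeffding inequality (Lemma~\ref{le:hoef-ineq}) to the error matrices $\set{\varepsilon_i}_{i=1}^n$, which satisfy the sub-Gaussian condition with parameters $\set{W_i}_{i=1}^n$ by Assumption~\ref{cond:sub-Gaussian}. Taking $t = \lambda / (\Mr{B2-bd} u_p n)$ in that lemma directly bounds $\pr(\mc{A}^c)$ by $2\rank(\sum_i W_i)\exp(-nt^2 / (2\sigma_{W,n}^2))$, so it remains only to verify that the chosen $\lambda$ in \eqref{eq:lambda-gauss} makes this quantity at most $\nu$.

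First I would substitute $t = \lambda/(\Mr{B2-bd} u_p n)$ into the exponent. With $\lambda = \sqrt{2}\,\sigma_{W,n}\, n\, \tau(\nu)$ and $\tau(\nu) = \sqrt{(u_p)^2 \Mr{B2-bd}^2 \log(2d/\nu)/n}$, a short computation gives $t = \sqrt{2}\,\sigma_{W,n}\sqrt{\log(2d/\nu)/n}$, hence $nt^2/(2\sigma_{W,n}^2) = \log(2d/\nu)$, so $\exp(-nt^2/(2\sigma_{W,n}^2)) = \nu/(2d)$. Then I would bound the leading factor: $\rank(\sum_i W_i) \le d$ since $\sum_i W_i$ is a $d\times d$ matrix, so the whole right-hand side is at most $2d \cdot \nu/(2d) = \nu$, which is exactly the claim.

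One minor point to address is the discrepancy between the notation $\sum_i V_i$ appearing inside the rank in the statement of Lemma~\ref{le:hoef-ineq} and the parameters $W_i$ of the sub-Gaussian condition; I would simply read $V_i$ as $W_i$ (the sub-Gaussian variance proxies), which is the intended meaning, and note that the bound $\rank(\sum_i W_i)\le d$ holds regardless. There is no substantive obstacle here: the only work is the arithmetic of plugging in $\lambda$, and the proof is essentially a one-line verification once Lemma~\ref{le:hoef-ineq} is invoked. The analogous sub-exponential statement would instead use Lemma~\ref{le:bernstein-ineq} (or Lemma~\ref{le:bernstein-ineq-subGaussian}) with the split choice of $\lambda$ from \eqref{eq:lambda}, but for the present lemma the Hoeffding bound suffices.
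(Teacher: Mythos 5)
Your proof is correct and follows exactly the paper's own route: apply the matrix Hoeffding bound (Lemma~\ref{le:hoef-ineq}) with $t=\lambda/(\Mr{B2-bd}u_p n)$, bound the rank factor by $d$, and verify that the choice of $\lambda$ in \eqref{eq:lambda-gauss} makes the exponent equal to $\log(2d/\nu)$. Your reading of $V_i$ as $W_i$ in the statement of Lemma~\ref{le:hoef-ineq} is the intended one, and your arithmetic is in fact slightly cleaner than the paper's (which writes $\log(2pd/\nu)$ where $\log(2d/\nu)$ is meant, though the conclusion is unaffected).
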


\begin{proof}
	Lemma \ref{le:hoef-ineq} implies
	\begin{align*}
		\pr(\mc{A}^c) 
		&= 
		\pr\pa{\set{\frac{1}{n} \Norm{\sum_{i=1}^n \varepsilon_i}_2 > \frac{\lambda}{\Mr{B2-bd} u_p n} }}
		\le 
		\pr\pa{\frac{1}{n} \Norm{\sum_{i=1}^n \varepsilon_i}_2 > \frac{\lambda}{\Mr{B2-bd} u_p n}}
		\\
		&\le 2d \exp\pa{-n \frac{\lambda^2 / (n^2\Mr{B2-bd}^2 (u_p)^2)}{2\sigma_{W,n}^2}}
		\le 2d\exp\pa{-n \frac{\{\tau(\nu)\}^2}{\Mr{B2-bd}^2 (u_p)^2}}
		\\
		&\le 
		2d \exp(- \log(2pd / \nu)) = \nu.
	\end{align*}
\end{proof}

\begin{lemma}\label{le:large-dev}
	Suppose Assumption \ref{cond:Bernstein-moments} holds.
	Define an event 
	\[
	\mc{B} = \set{\frac{1}{n} \Norm{\sum_{i=1}^n \varepsilon_i}_2 \le \frac{\lambda}{\Mr{B2-bd} u_p n} }.
	\]
	Choosing $\lambda$ as \eqref{eq:lambda}, we obtain
	\[
	\pr(\mc{B}^c) \le \nu.
	\]
\end{lemma}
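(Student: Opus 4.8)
The plan is to imitate the proof of Lemma~\ref{le:large-dev-gauss} almost verbatim, replacing the matrix Hoeffding bound (Lemma~\ref{le:hoef-ineq}) by the matrix Bernstein bound (Lemma~\ref{le:bernstein-ineq}); the latter applies because $\set{\varepsilon_i}_{i=1}^n$ satisfy the Bernstein moment condition under Assumption~\ref{cond:Bernstein-moments}. First I would set $t = \lambda/(\Mr{B2-bd}\,u_p\,n)$, so that $\mc{B}^c = \set{\frac{1}{n}\Norm{\sum_{i=1}^n\varepsilon_i}_2 > t}$, and then apply Lemma~\ref{le:bernstein-ineq} with $X_i = \varepsilon_i$, bounding $\rank\pa{\sum_{i=1}^n\Var(\varepsilon_i)} \le d$, to obtain
\[
\pr(\mc{B}^c) \le 2d\exp\pa{-\frac{nt^2}{2(\sigma_{\varepsilon,n}^2 + bt)}}.
\]
It then suffices to show the exponent is at least $\log(2d/\nu)$; writing $L := \sqrt{\log(2d/\nu)/n} = \tau(\nu)/(\Mr{B2-bd}\,u_p)$ so that $\log(2d/\nu) = nL^2$, this is equivalent to the scalar inequality $t^2 \ge 2L^2(\sigma_{\varepsilon,n}^2 + bt)$.

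For that last step I would unwind the definitions: with $\lambda$ chosen as in~\eqref{eq:lambda} and $\tau(\nu) = \Mr{B2-bd}\,u_p\,L$, one gets $t = \sqrt{2}\,\sigma_{\varepsilon,n}L + 2b\,u_p\Mr{B2-bd}\,L^2 =: a_1 + a_2$. Here $a_1^2 = 2\sigma_{\varepsilon,n}^2 L^2$, and because the identity basis matrix $B_1 = I$ has $\rank(I) = d$ and $\fr{I} = \sqrt{d}$, we have $u_p \ge \sqrt{d}$, hence $u_p\Mr{B2-bd} \ge 1$ and thus $a_2 \ge 2bL^2$. Expanding $t^2 = a_1^2 + 2a_1a_2 + a_2^2$ and comparing it termwise against $2L^2\sigma_{\varepsilon,n}^2 + 2bL^2 t = 2L^2\sigma_{\varepsilon,n}^2 + 2bL^2 a_1 + 2bL^2 a_2$ --- using $a_1^2 = 2L^2\sigma_{\varepsilon,n}^2$, $2a_1a_2 \ge 2bL^2 a_1$, and $a_2^2 \ge 2bL^2 a_2$ (the last two again from $u_p\Mr{B2-bd} \ge 1$) --- yields $t^2 \ge 2L^2(\sigma_{\varepsilon,n}^2 + bt)$. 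Substituting this back gives $\pr(\mc{B}^c) \le 2d\exp(-\log(2d/\nu)) = \nu$, as claimed. Equivalently, one may note that $t^2 - 2bL^2 t - 2L^2\sigma_{\varepsilon,n}^2 \ge 0$ because $t$ exceeds the positive root $bL^2 + \sqrt{b^2L^4 + 2L^2\sigma_{\varepsilon,n}^2} \le 2bL^2 + \sqrt{2}\,L\sigma_{\varepsilon,n}$ of that quadratic, using $\sqrt{a+b}\le\sqrt{a}+\sqrt{b}$.

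I expect no genuine obstacle here --- it is a routine sub-exponential tail estimate. The only point needing mild care is the bookkeeping that splits $t$ into the two additive pieces matching the two summands of $\lambda$ in~\eqref{eq:lambda}, together with the elementary observation $u_p\Mr{B2-bd}\ge 1$; this is precisely what allows the quadratic ($b$-dependent) part of $t$ to absorb both the cross term and the quadratic term in the target inequality $t^2 \ge 2L^2(\sigma_{\varepsilon,n}^2 + bt)$.
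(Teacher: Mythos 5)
Your proof is correct and follows essentially the same route as the paper's: apply the matrix Bernstein inequality (Lemma~\ref{le:bernstein-ineq}) at $t=\lambda/(\Mr{B2-bd}\,u_p\,n)$, bound the rank prefactor by $d$, and verify that the resulting exponent is at least $\log(2d/\nu)$ by splitting $t$ into the linear and quadratic pieces of \eqref{eq:lambda}. The only difference is that you make the requirement $u_p\Mr{B2-bd}\ge 1$ explicit (it holds because $I$ is among the basis matrices), whereas the paper's displayed computation uses this fact only implicitly when absorbing the $b$-dependent terms of the denominator.
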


\begin{proof}
	Lemma \ref{le:bernstein-ineq} implies
	\begin{align*}
		\pr(\mc{B}^c) 
		&\le 
		\pr\pa{\frac{1}{n} \Norm{\sum_{i=1}^n \varepsilon_i}_2 > \frac{\lambda}{\Mr{B2-bd} u_p n}}
		\\
		&\le 
		2d \exp\pa{-n \frac{(2\sigma^2_{\varepsilon,n} \{\tau(\nu)\}^2 + 4b^2 \tau^4(\nu) + 2\sqrt{2} b \sigma_{\varepsilon,n} \tau^3(\nu))/(\Mr{B2-bd}^2 (u_p)^2)}{2\sigma^2_{\varepsilon,n} + (2\sqrt{2} b \sigma_{\varepsilon,n} \tau(\nu) + 4b^2 \{\tau(\nu)\}^2) / (\Mr{B2-bd} u_p)}}
		\\
		&\le 
		2d \exp\pa{-n \frac{\{\tau(\nu)\}^2}{\Mr{B2-bd}^2 d^2} \frac{2\sigma^2_{\varepsilon,n} + 2\sqrt{2} b \sigma_{\varepsilon,n} \tau(\nu) + 4b^2 \{\tau(\nu)\}^2}{2\sigma^2_{\varepsilon,n} + 2\sqrt{2} b \sigma_{\varepsilon,n} \tau(\nu) + 4b^2 \{\tau(\nu)\}^2}}
		\\
		&\le 
		2d \exp(- \log(2d / \nu)) = \nu.
	\end{align*}
\end{proof}

\subsubsection*{Proof of Theorem \ref{th:nonasymptotic-bd-gauss}}
Assume that the event $\mc{A}$ has happened. On the event $\mc{A}$, it follows from Lemma \ref{le:risk-ub} and Lemma \ref{le:large-dev-gauss} that
\begin{align*}
	\fr{\hat{\Sigma} - \Sigma }^2 
	&\le
	\frac{2 u_p \Mr{B2-bd}}{n} \Norm{\sum_{i=1}^n \varepsilon_i}_2 \av{\theta^* - \hat{\theta}}_1
	+
	\frac{2\lambda}{n} (\av{\theta^*}_1 - \av{\hat{\theta}}_1)
	\\
	&\le 
	\frac{2\lambda}{n} \av{\theta^* - \hat{\theta}}_1
	+
	\frac{2\lambda}{n} (\av{\theta^*}_1 - \av{\hat{\theta}}_1)
	\le
	\frac{4\lambda}{n} \av{\theta^*}_1 
	\\
	&\le
	4\sqrt{2} \sigma_{W,n} u_p \av{\theta^*}_1 \sqrt{\frac{\Mr{B2-bd}^2 \log(2d / \nu)}{n}}
\end{align*}

\subsubsection*{Proof of Theorem \ref{th:nonasymptotic-bd}}
We assume that the event $\mc{B}$ has happened and proceed as in the proof of Theorem \ref{th:nonasymptotic-bd-gauss} using the result of Lemma \ref{le:large-dev}. It follows that
\begin{align*}
	\fr{\hat{\Sigma} - \Sigma }^2 
	\le
	\frac{4\lambda}{nd} \av{\theta^*}_1 
	\le
	4\sqrt{2} \sigma_{\varepsilon,n} u_p \av{\theta^*}_1 \sqrt{\frac{\Mr{B2-bd}^2 \log(2pd / \nu)}{n}} +
	8b (u_p)^2 \av{\theta^*}_1 \frac{\Mr{B2-bd}^2 \log(2d / \nu)}{n}.
\end{align*}

%--------------------------------------------------------------------------------
\bibliographystyle{agsm}
\bibliography{CR}
%--------------------------------------------------------------------------------

\end{document}